\documentclass[journal]{IEEEtran}

\ifCLASSINFOpdf
  \usepackage[pdftex]{graphicx}

\else

  \usepackage[dvips]{graphicx}

\fi
\usepackage{subfig}
\usepackage{xcolor}
\usepackage{multirow}
\usepackage{booktabs}
\usepackage{pifont}

\usepackage{cite}
\usepackage{amsmath,amssymb,amsfonts}
\usepackage{amsthm}
\usepackage{algorithmic}
\usepackage{graphicx}
\usepackage{textcomp}
\usepackage{xcolor}
\usepackage{booktabs}
\usepackage{multirow}

\usepackage{amsthm}

\usepackage{caption}
\newtheorem{theorem}{Theorem}

\newtheorem{proposition}{Proposition}
\newtheorem{corollary}{Corollary}

\theoremstyle{remark}
\newtheorem{remark}{Remark}

\def\BibTeX{{\rm B\kern-.05em{\sc i\kern-.025em b}\kern-.08em
    T\kern-.1667em\lower.7ex\hbox{E}\kern-.125emX}}

\begin{document}

\title{Radiomap Blind Prediction under Incomplete Observation: Error Characterization and Correctable Propagation-Prior Learning}

\author{Xiaojie Li,~\IEEEmembership{Student Member,~IEEE,}
Yu Han,~\IEEEmembership{Member,~IEEE,}
Han Fang,
Shangqing Liu,
\\
Guangxu Zhu,~\IEEEmembership{Member,~IEEE,}
Shi Jin,~\IEEEmembership{Fellow,~IEEE,}
and Chao-Kai Wen,~\IEEEmembership{Fellow,~IEEE}
\thanks{
A preliminary version of this work, entitled
``Rethinking Radiomap Blind Prediction with Limited Environment and
Configuration Representations,'' has been accepted for presentation at
the 2026 IEEE Global Communications Conference (GLOBECOM).

Xiaojie Li, Yu Han, and Shi Jin are with the School of Information Science and Engineering, Southeast University, Nanjing 210096, China, and also with National Mobile Communications Research Laboratory, Southeast University, Nanjing 210096, China (e-mail: \{xiaojieli, hanyu, and jinshi\}@seu.edu.cn).

Han Fang is with the School of Cyber Science and Engineering, Southeast University, Nanjing 210096, China (e-mail: h\_fang@seu.edu.cn).

Shangqing Liu is with the State Key Laboratory of Novel Software Technology, Nanjing University, Nanjing 210023, China (e-mail: shangqingliu@nju.edu.cn).

Guangxu Zhu is with the Shenzhen Research Institute of Big Data, The Chinese University of Hong Kong-Shenzhen, Guangdong 518172, China (e-mail: gxzhu@sribd.cn).

Chao-Kai Wen is with the Institute of Communications Engineering,
National Sun Yat-sen University, Kaohsiung 80424, Taiwan (e-mail:
chaokai.wen@mail.nsysu.edu.tw).}

}

\markboth{IEEE Transactions on Wireless Communications}%
{Shell \MakeLowercase{\textit{et al.}}: Bare Demo of IEEEtran.cls for IEEE Journals}

\maketitle

\begin{abstract}
Radiomap blind prediction aims to infer radiomaps from observable representations of the propagation environment and base station configuration without field measurements. In practice, the observable representations are inherently incomplete. Thus, the target radiomap is not fully determined by the inputs when generalizing to unseen configurations or environments. Under incomplete observation, we establish a population-level theory of deterministic radiomap blind prediction that identifies the conditional mean as its optimal target and separates prediction error into reducible predictor approximation and irreducible uncertainty caused by missing physical information. The framework further characterizes the train-test risk gap and the uncertainty reduction enabled by observation enrichment. Building on it, we reveal the dual role of propagation priors: they provide physically grounded guidance, yet their implementable forms may bias the attainable predictor. This motivates RadioDecomp, which treats a prior-guided predictor as a correctable base and learns its remaining predictable discrepancy through residual refinement. To evaluate RadioDecomp across distinct propagation-prior designs, we instantiate it with a feature-guided monolithic base and a LoS-Shadow structured base, yielding RadioFR and RadioLSR, respectively. Across random, cross-configuration, and cross-environment settings, experiments confirm the benefit of propagation-related representations and show that both instantiations improve upon their respective bases. Further controlled studies on base capacity, training-support coverage, and observation coarsening corroborate the proposed analysis.

\end{abstract}

\begin{IEEEkeywords}
radiomap, incomplete observation, generalization, residual learning, wireless propagation model
\end{IEEEkeywords}

\section{Introduction}
\IEEEPARstart{F}{uture} 6G networks are expected to rely on environment-aware radio intelligence, creating strong demand for radiomaps that characterize the spatial distribution of radio attributes such as received power and angle of arrival \cite{zeng2024tutorial}. Such radiomaps are valuable for wireless network planning\cite{10877906}, optimization\cite{10299600}, predictive radio management\cite{11300801,11418311}, and covert communications\cite{11511762,11075592}. In practice, obtaining radiomaps through exhaustive measurements is prohibitively costly, motivating extensive research on radiomap reconstruction~\cite{ref3,ref4,ref5}. However, many applications require reliable radiomaps for areas or network configurations without available radiomap measurements, such as coverage prediction for base station (BS) deployment, reconfiguration, or decommissioning. This gives rise to \emph{radiomap blind prediction}, which predicts radiomaps solely from observable representations of the propagation environment and BS configuration~\cite{bi2019engineering,li2026u6gxlmimoradiomapprediction}.

Traditional radiomap blind prediction mainly relied on statistical propagation models~\cite{ref6,ref7} or ray tracing~\cite{ref8}. Statistical models infer radiomaps from empirical propagation laws calibrated for specific scenario classes or from stochastic models of network geometry, enabling efficient estimation but limited site-specific characterization. Ray tracing instead simulates propagation paths within a locally modeled scene, offering finer radiomap characterization but requiring detailed environment and BS modeling and substantial computation. Enabled by advances in deep learning, learning-based methods have emerged as a critical alternative approach. Such methods directly learn the mapping from observable environment and BS-configuration representations to their corresponding radiomaps. Early studies mainly investigated radiomap blind prediction across varying environments under fixed or simplified BS configurations~\cite{radiounet,rmegan,radiodiff}. Subsequent studies extended this setting to jointly varying environments and configurations, explicitly accounting for variations in carrier frequency, antenna patterns, array configurations, and beamforming~\cite{ref13,ref14,ref15,ref16,ref17,li2026u6gxlmimoradiomapprediction}. However, only partial environment and configuration representations are available in practical blind prediction.

Such incomplete representations omit propagation-relevant factors, such as material properties, fine-grained geometry, small scatterers, indoor structures, and hardware imperfections. Consequently, the same observable representation may be compatible with multiple complete physical states that produce different radiomaps. Nevertheless, radiomap blind prediction is commonly formulated using deterministic predictors that return a single estimate for each observable input. This contrast raises fundamental questions about the statistical target approximated by such a predictor and about how its prediction error and generalization should be understood under incomplete observation.

Understanding these issues requires distinguishing radiomap prediction under
incomplete observation from radiomap generation under a complete physical state. Theoretically, the observable and hidden factors together define a complete physical state. Under such state, the radiomap is deterministically generated through electromagnetic propagation ultimately governed by Maxwell's equations. Thus, the physical regularities make propagation knowledge a natural and principled source of structural guidance for radiomap blind prediction, particularly under finite training data and unseen test conditions.

Existing methods incorporate such propagation knowledge at the feature, model, and loss levels. Feature-level methods transform propagation knowledge into model inputs~\cite{ref20,ref21,ref22,ref23,ref24}, model-level methods encode propagation processes into network structures~\cite{ref25,ref26,ref27,ref28}, and loss-level methods constrain model training using propagation rules~\cite{ref29,ref30,ref31}. These mechanisms utilise useful propagation regularities and reduce the difficulty of learning from finite samples\cite{li2026crossdomainradiomappredictionmultiscatterer}. However, under incomplete observation, the complete propagation process cannot be fully characterized from the observable inputs, while implementable propagation priors typically capture selected or simplified physical regularities. Consequently, such priors may be inconsistent with the statistically optimal prediction relation, thereby introducing prior-induced bias. Existing studies have primarily emphasized their empirical benefits, leaving this benefit-bias tradeoff under incomplete observation insufficiently understood.

To address these issues, we develop a population-level framework for radiomap blind prediction under incomplete observation. We formulate radiomap generation using observable and hidden physical factors and identify the conditional mean given the observable representation as the optimal deterministic predictor under squared loss, providing a statistical basis for understanding prediction error and train-test generalization. We further examine propagation-prior injection relative to this conditional mean target and show that, although propagation priors can facilitate finite-sample learning, their partial or simplified physical assumptions may introduce prior-induced bias. This observation motivates \emph{RadioDecomp}, which retains a propagation-prior-guided predictor as its base and introduces a residual predictor to correct its deviation from the conditional target. The main contributions of this paper are summarized as follows:

\begin{itemize}
\item \textbf{Incomplete-Observation Error Analysis Framework.}
We establish a population-level framework to distinguish
radiomap blind prediction errors addressable by predictor improvement from those caused by incomplete observation. Under squared loss, we identify the conditional-mean radiomap as the optimal prediction target and decompose the expected error of any deterministic predictor into its approximation error relative to this target and irreducible uncertainty. The framework further characterizes how these two terms contribute to the train-test risk gap and proves that additional observations strictly reduce the uncertainty whenever they change the conditional-mean radiomap with positive probability.

    \item \textbf{Propagation-Prior Benefit-Bias Analysis.}
    Building on this framework, a unified analysis of feature-, model-, and loss-level prior injection characterizes information-compression, structural-approximation, and loss-induced biases. Together, these results show how propagation priors can facilitate finite-sample learning while their implementable forms may bias the attainable predictor.

\item \textbf{RadioDecomp: Correctable Propagation-Prior Learning.}
We propose RadioDecomp, which retains a prior-guided radiomap predictor as a base and learns its remaining predictable discrepancy through a residual branch. Two instantiations examine complementary correction needs: RadioFR addresses prediction errors that remain under information-preserving feature augmentation, while RadioLSR addresses discrepancies under a more restrictive LoS-Shadow base with region-specific inputs. Experiments under random, cross-configuration, cross-environment, and limited-data settings demonstrate the benefits of prior guidance and residual correction, while studies of training-support coverage and observation coarsening support the theoretical analysis.
\end{itemize}

\section{Related Work}
\label{sec:related_work}

\subsection{Learning-Based Radiomap Prediction}
\label{subsec:learning_based_prediction}
Early blind-prediction methods mainly modeled environmental variations under fixed or simplified transmitter configurations (isotropic antenna pattern). RadioUNet~\cite{radiounet} established an image-to-image prediction paradigm using building maps and transmitter locations, while RME-GAN~\cite{rmegan} employed a conditional generative adversarial network for radiomap prediction. RadioDiff~\cite{radiodiff} subsequently introduced a conditional diffusion model for measurement-free prediction.

More recent studies have extended radiomap learning toward configuration awareness. Directional antenna patterns were incorporated through spatial gain projections~\cite{ref13}, while UniRM~\cite{ref14} studied adaptation across frequency bands and receiver heights with sparse target measurements. BeamCKM~\cite{ref15} and BeamCKMDiff~\cite{ref16} modeled discrete and continuous beamforming configurations, respectively. Configuration-dependent coverage was also investigated through disentangled representations using real measurements~\cite{ref17} and explicit beam-map representations for U6G XL-MIMO systems~\cite{li2026u6gxlmimoradiomapprediction}. These studies demonstrate a progression from environment-aware to joint environment- and configuration-aware prediction, but the statistical target of deterministic blind prediction under incomplete observation remains insufficiently understood.

\subsection{Propagation Priors for Radiomap Prediction}
\label{subsec:propagation_prior_related_work}

Propagation knowledge has been incorporated at the feature, model, and loss levels. Feature-level methods provide models with propagation-related representations derived from geometry, materials, attenuation, obstruction, or wave equations~\cite{ref20,ref21,ref22,ref23,ref24}. Such features can expose useful propagation patterns to a radiomap predictor but may be inaccurate because they depend on simplified models and incomplete observation.

Model-level methods embed propagation assumptions into predictor structures. Existing examples include ray-tracing-informed large-model modules~\cite{ref25}, geometry-aware decomposition of propagation components~\cite{ref26}, reciprocity-constrained bidirectional Gaussian splatting~\cite{ref27}, physics-guided diffusion architectures~\cite{ref28}, and disentangled modeling of coverage factors~\cite{ref17}. These structures can improve data efficiency, but structural mismatch may restrict the hypothesis class and limit its approximation of the optimal predictor.

Loss-level methods incorporate propagation knowledge through physical constraints or propagation-aware error metrics during training. Representative approaches use volume-integral equations~\cite{ref29}, LoS power constraints and NLoS propagation-delay lower bounds~\cite{ref30}, Helmholtz-equation and boundary residuals~\cite{ref31}, or logarithmic-power-domain prediction losses~\cite{ref27}. Approximate constraints and their weights may, however, shift the population optimum of the original prediction task.

Totally, existing studies primarily demonstrate the empirical benefits of task-specific propagation priors. A unified understanding of their finite-sample benefits and prior-induced biases under incomplete observation, together with a general mechanism for correcting such biases, is still lacking.

\section{Error Characterization under Incomplete Observation}
\label{sec:error_anatomy}

\subsection{From Physical Generation to Blind Prediction}
\label{subsec:full_to_blind}

Let \(E\) denote the complete physical environment and \(C\) the complete BS configuration. The former includes propagation-relevant scene information, such as geometry and material properties, while the latter specifies the BS location and transmission setup, including the antenna array, carrier frequency, and beamforming configuration. A radiomap is a discretized spatial representation of a radio attribute, such as received power, angle of arrival, or angle of departure. In the 2D setting considered here, a radiomap realization is represented by \(\mathbf y\in\mathbb R^{H\times W}\). The same formulation can be extended to voxelized 3D radiomaps~\cite{10963917,11455177}. When \(E\) and \(C\) are fully specified, radiomap generation can be
expressed as
\begin{equation}
    \mathbf y=F(E,C),
    \label{eq:full_generation}
\end{equation}
where \(F\) denotes the deterministic propagation mechanism. Therefore, full-information radiomap generation is a deterministic forward problem, which can in principle be numerically solved using full-wave simulation\cite{10173521} or high-fidelity ray tracing.

Radiomap blind prediction differs because the predictor receives only incomplete observation of the environment and BS configuration. The observable environment may omit fine-scale geometry, electromagnetic material properties, indoor structures, or small scatterers. Likewise, recorded BS configurations may contain only coarse parameters, such as the beam direction, 3-dB beamwidth, and nominal transmit power, while the exact beam pattern, hardware-induced power offsets, and other site-specific calibration parameters remain unrecorded.

Let \(\mathbf Z=(E_{\mathrm{obs}},C_{\mathrm{obs}})\) denote the observable representation, taking values in the observable-input space \(\mathcal Z\). Elements of \(\mathcal Z\) may combine spatial environment representations with discrete and continuous BS-configuration variables. Let \(\mathbf H\) collect the remaining hidden propagation factors. The complete physical state can then be partitioned into \((\mathbf Z,\mathbf H)\). Accordingly,
\begin{equation}
    \mathbf Y=F(\mathbf Z,\mathbf H).
    \label{eq:zyh_generation}
\end{equation}

For a given observable representation \(\mathbf z\), multiple hidden states may be compatible with it and produce different radiomaps. For example, scenes with the same observable geometry may differ in material properties or unobserved scatterers. Accordingly,\footnote{Throughout this paper, bold uppercase symbols denote random variables, and their lowercase counterparts denote realizations. We use \(p(\cdot)\) generically for probability laws, including discrete, continuous, and mixed cases.}
\begin{equation}
\mathbf Y\mid \left(\mathbf Z=\mathbf z\right)=F(\mathbf z,\mathbf H),\qquad\mathbf H\sim p(\cdot\mid\mathbf z).
\label{eq:conditional_generation}
\end{equation}

Although \(\mathbf H\) is unobserved, a deterministic blind predictor must return a single radiomap for each observable input \(\mathbf z\). This single-valued mapping is learned from paired observations of \(\mathbf Z\) and \(\mathbf Y\). Under squared loss, its population-optimal output is
\begin{equation}
\begin{aligned}
m^\star(\mathbf z)&:=\arg\min_{\widehat{\mathbf y}}\mathbb E\left[\left\| \mathbf Y-\widehat{\mathbf y}\right\|_{\mathrm F}^{2}\mid\mathbf Z=\mathbf z\right] \\
&=\mathbb E\left[\mathbf Y\mid\mathbf Z=\mathbf z\right]=\mathbb E_{\mathbf H\sim p(\cdot\mid\mathbf z)}\left[F(\mathbf z,\mathbf H)\right].
\end{aligned}
\label{eq:population_conditional_mean}
\end{equation}

\begin{remark}[Learning target and error-analysis perspective]
\label{rem:population_target}
For a given data population, deterministic radiomap blind prediction aims to learn a mapping \(f:\mathcal Z\rightarrow\mathbb R^{H\times W}\) from the available observation to a single radiomap estimate. \textbf{Under squared loss, the ideal predictor is the conditional-mean mapping \(m^\star\), and the learning objective is to approximate this mapping as closely as possible.} Even if \(f(\mathbf Z)=m^\star(\mathbf Z)\) almost surely, its prediction may still differ from the realized radiomap \(\mathbf Y\), because hidden propagation factors remain unresolved by the observation. The following analysis therefore uses \(m^\star\) as the reference predictor to distinguish the error due to imperfect approximation by \(f\) from the irreducible uncertainty under incomplete observation, and examines how these two terms vary between training and test domains.
\end{remark}

\subsection{Domain-Wise Risk Decomposition and Train-Test Gap}
\label{subsec:domain_wise_risk}
A radiomap predictor is often trained on available environments and BS configurations and then applied to unseen ones. Its prediction error may change because both the observable environment/configuration distribution and the unrecorded propagation factors differ between training and test deployment. Let \(d\in\{\mathrm{tr},\mathrm{te}\}\) index the training and test domains, respectively. Each domain is characterized by a joint distribution \(p_d(\mathbf z,\mathbf h)\) over the observable input and the unobserved propagation factors. Accordingly, the population-optimal deterministic predictor under squared loss is domain-dependent:
\begin{equation}
m_d^\star(\mathbf z):=\mathbb E_d
\left[\mathbf Y\mid\mathbf Z=\mathbf z
\right]=\mathbb E_{\mathbf H\sim p_d(\cdot\mid\mathbf z)}\left[F(\mathbf z,\mathbf H)\right].
\label{eq:domain_conditional_mean}
\end{equation}
The corresponding conditional uncertainty is defined as
\begin{equation}
v_d(\mathbf z):=\mathbb E_d\left[\left\|\mathbf Y-m_d^\star(\mathbf z)\right\|_{\mathrm F}^{2}\mid\mathbf Z=\mathbf z\right].
\label{eq:domain_conditional_uncertainty}
\end{equation}
The conditional target \(m_d^\star\) describes the deterministic prediction target \(m^\star\) in domain \(d\), whereas \(v_d\) measures the remaining radiomap variation induced by the unobserved factors. These functions may differ across domains when the conditional hidden-factor distributions differ. Independently, different observable-input distributions determine how the corresponding pointwise quantities are aggregated within each domain. The following theorem shows that these two quantities yield an exact decomposition of the domain risk.

\begin{theorem}[Domain-wise risk decomposition]
\label{thm:domain_risk_decomposition}
For any deterministic predictor \(f\) with finite second moment and any
\(d\in\{\mathrm{tr},\mathrm{te}\}\), the population risk satisfies
\begin{equation}
\begin{aligned}
R_d(f)
&:=
\mathbb E_d\left[\left\|\mathbf Y-f(\mathbf Z)\right\|_{\mathrm F}^{2}\right] \\&=\underbrace{\mathbb E_d\left[\left\|f(\mathbf Z)-m_d^\star(\mathbf Z)\right\|_{\mathrm F}^{2}\right]}_{\mathcal A_d(f)}+\underbrace{\mathbb E_d\left[v_d(\mathbf Z)\right]}_{\mathcal U_d},
\end{aligned}
\label{eq:domain_risk_decomposition}
\end{equation}
where \(\mathcal A_d(f)\) is the domain-specific approximation error and \(\mathcal U_d\) is the irreducible uncertainty induced by incomplete observation in domain \(d\).
\end{theorem}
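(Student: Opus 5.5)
The plan is the standard bias--variance (Pythagorean) argument in $L^2$, carried out conditionally on $\mathbf Z$ and then averaged over $p_d(\mathbf z)$. Fix $d$ and write the residual as
\[
\mathbf Y-f(\mathbf Z)=\bigl(\mathbf Y-m_d^\star(\mathbf Z)\bigr)+\bigl(m_d^\star(\mathbf Z)-f(\mathbf Z)\bigr).
\]
Expanding the squared Frobenius norm gives three terms, using $\langle\mathbf A,\mathbf B\rangle_{\mathrm F}=\mathrm{tr}(\mathbf A^{\mathsf T}\mathbf B)$:
\begin{itemize}
\item the squared uncertainty term $\|\mathbf Y-m_d^\star(\mathbf Z)\|_{\mathrm F}^2$;
\item the squared approximation term $\|f(\mathbf Z)-m_d^\star(\mathbf Z)\|_{\mathrm F}^2$;
\item the cross term $2\langle \mathbf Y-m_d^\star(\mathbf Z),\,m_d^\star(\mathbf Z)-f(\mathbf Z)\rangle_{\mathrm F}$.
\end{itemize}
Taking $\mathbb E_d$ and applying the tower property $\mathbb E_d[\cdot]=\mathbb E_d[\mathbb E_d[\cdot\mid\mathbf Z]]$ to the first term yields $\mathbb E_d[v_d(\mathbf Z)]=\mathcal U_d$ directly from \eqref{eq:domain_conditional_uncertainty}. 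The second term is $\mathcal A_d(f)$ by definition.

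For the cross term, condition on $\mathbf Z$. Since $f$ is deterministic, $m_d^\star(\mathbf Z)-f(\mathbf Z)$ is $\sigma(\mathbf Z)$-measurable and can be pulled out of the conditional expectation:
\[
\mathbb E_d\bigl[\langle \mathbf Y-m_d^\star(\mathbf Z),\,m_d^\star(\mathbf Z)-f(\mathbf Z)\rangle_{\mathrm F}\mid\mathbf Z\bigr]
=\bigl\langle \mathbb E_d[\mathbf Y\mid\mathbf Z]-m_d^\star(\mathbf Z),\,m_d^\star(\mathbf Z)-f(\mathbf Z)\bigr\rangle_{\mathrm F}.
\]
This vanishes because $\mathbb E_d[\mathbf Y\mid\mathbf Z]=m_d^\star(\mathbf Z)$ by \eqref{eq:domain_conditional_mean}. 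A further application of the tower property kills the cross term, and the decomposition \eqref{eq:domain_risk_decomposition} follows.

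The main obstacle is integrability, so that each expectation is finite and the splitting and pull-out steps are legitimate. I will assume, implicitly as the statement does, that $\mathbb E_d\|\mathbf Y\|_{\mathrm F}^2<\infty$. If $R_d(f)$ is meant to be meaningful at all, this is essentially forced. Then:
\begin{itemize}
\item Conditional Jensen gives $\mathbb E_d\|m_d^\star(\mathbf Z)\|_{\mathrm F}^2\le\mathbb E_d\|\mathbf Y\|_{\mathrm F}^2<\infty$.
\item Together with the finite second moment of $f(\mathbf Z)$, both $\mathbf Y-m_d^\star(\mathbf Z)$ and $m_d^\star(\mathbf Z)-f(\mathbf Z)$ are square-integrable.
\item Cauchy--Schwarz then shows the cross term is integrable, which justifies linearity of expectation and the "taking out what is known" property entrywise.
\end{itemize}
If $\mathbb E_d\|\mathbf Y\|_{\mathrm F}^2=\infty$, both sides are $+\infty$ (using $\mathcal U_d\ge$ the corresponding infinite quantity), and I will remark on this only briefly.
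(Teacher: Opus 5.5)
Your proposal is correct and is essentially the paper's proof: you add and subtract $m_d^\star(\mathbf Z)$, kill the cross term via $\mathbb E_d[\mathbf Y-m_d^\star(\mathbf Z)\mid\mathbf Z]=0$, and identify the remaining term with $\mathcal U_d$ by the tower property, while your integrability checks (conditional Jensen plus Cauchy--Schwarz) supply rigor the paper leaves implicit. One small correction to your side remark on the case $\mathbb E_d\|\mathbf Y\|_{\mathrm F}^2=\infty$: the divergence need not sit in $\mathcal U_d$ (if $\mathbf Y$ is a function of $\mathbf Z$ then $\mathcal U_d=0$), so argue instead from $\|\mathbf Y\|_{\mathrm F}^2\le 3\bigl(\|\mathbf Y-m_d^\star(\mathbf Z)\|_{\mathrm F}^2+\|m_d^\star(\mathbf Z)-f(\mathbf Z)\|_{\mathrm F}^2+\|f(\mathbf Z)\|_{\mathrm F}^2\bigr)$ that $\mathcal A_d(f)+\mathcal U_d=\infty$.
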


\begin{proof}
Adding and subtracting \(m_d^\star(\mathbf Z)\) gives
\begin{equation}
\begin{aligned}
R_d(f)={}&\mathbb E_d\left[\left\|f(\mathbf Z)-m_d^\star(\mathbf Z)\right\|_{\mathrm F}^{2}\right]+\mathbb E_d\left[\left\|\mathbf Y-m_d^\star(\mathbf Z)\right\|_{\mathrm F}^{2}\right]\\
&+2\mathbb E_d\left[\left\langle f(\mathbf Z)-m_d^\star(\mathbf Z), m_d^\star(\mathbf Z)-\mathbf Y \right\rangle_{\mathrm F}\right].
\end{aligned}
\end{equation}
The cross term is zero because
$\mathbb E_d\left[\mathbf Y-m_d^\star(\mathbf Z)\mid\mathbf Z\right]=0$. The second term equals \(\mathbb E_d[v_d(\mathbf Z)]\) by the law of total expectation, which proves \eqref{eq:domain_risk_decomposition}.
\end{proof}

Theorem~\ref{thm:domain_risk_decomposition} separates two sources of radiomap prediction error with different physical implications. The approximation term \(\mathcal A_d(f)\) measures how effectively the predictor exploits the recorded environment and BS configuration, whereas \(\mathcal U_d\) captures the remaining radiomap variability associated with hidden propagation factors. For a fixed observation and domain, \(R_d(f)\geq\mathcal U_d\), with equality when the predictor recovers \(m_d^\star\) almost surely. Hence, increasing model capacity or improving training can bring the prediction closer to the conditional-mean radiomap, but cannot resolve propagation differences that the input does not distinguish. For example, a geometry-only predictor may learn typical attenuation and blockage patterns while retaining uncertainty associated with unrecorded material properties. The decomposition therefore distinguishes the benefit attainable through predictor improvement from that requiring richer physical observations. Since both terms depend on the deployment domain, their changes jointly determine the train-test risk gap.

\begin{corollary}[Train-test risk gap]
\label{cor:risk_gap}
For any fixed deterministic predictor \(f\),
\begin{equation}
\begin{aligned}
R_{\mathrm{te}}(f)-R_{\mathrm{tr}}(f)={}&\underbrace{\mathcal A_{\mathrm{te}}(f)-\mathcal A_{\mathrm{tr}}(f)}_{\text{domain-specific approximation gap}}
\\
&+\underbrace{\mathcal U_{\mathrm{te}}-\mathcal U_{\mathrm{tr}}}_{\text{conditional-uncertainty gap}}.
\end{aligned}
\label{eq:train_test_risk_gap}
\end{equation}
\end{corollary}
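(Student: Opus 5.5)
The plan is to apply Theorem~\ref{thm:domain_risk_decomposition} once in each domain and then subtract the two identities. Fix a deterministic predictor \(f\) with finite second moment under both \(p_{\mathrm{tr}}\) and \(p_{\mathrm{te}}\). This is the standing hypothesis of the theorem, and I will assume it implicitly for the corollary.

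Taking \(d=\mathrm{te}\) in \eqref{eq:domain_risk_decomposition} gives \(R_{\mathrm{te}}(f)=\mathcal A_{\mathrm{te}}(f)+\mathcal U_{\mathrm{te}}\), and taking \(d=\mathrm{tr}\) gives \(R_{\mathrm{tr}}(f)=\mathcal A_{\mathrm{tr}}(f)+\mathcal U_{\mathrm{tr}}\). Subtracting the second identity from the first and regrouping the terms yields \eqref{eq:train_test_risk_gap} directly.

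The one point needing care is that the subtraction is meaningful, i.e., that all four quantities are finite. This follows from the finite second moments of \(f(\mathbf Z)\) and \(\mathbf Y\) in each domain:
\begin{itemize}
\item \(\mathcal U_d\le \mathbb E_d\|\mathbf Y\|_{\mathrm F}^2<\infty\), because the conditional mean minimizes the conditional squared error and so does no worse than the zero predictor.
\item \(\mathcal A_d(f)\le R_d(f)<\infty\), by the nonnegativity of the uncertainty term.
\end{itemize}
Hence no \(\infty-\infty\) ambiguity arises. I expect no real obstacle here; the only substantive observation is that \(f\) is the same function in both domains, while \(m_d^\star\) and \(v_d\) are domain-specific, so each domain uses its own reference target.
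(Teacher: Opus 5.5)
Your proposal is correct and is the same argument the paper relies on: the corollary is stated without a separate proof as an immediate consequence of Theorem~\ref{thm:domain_risk_decomposition}, obtained by writing that decomposition for \(d=\mathrm{te}\) and \(d=\mathrm{tr}\) and subtracting. Your finiteness check, which tacitly uses \(\mathbb E_d\|\mathbf Y\|_{\mathrm F}^2<\infty\) in both domains (an assumption the paper also leaves implicit), is a harmless refinement that the paper omits.
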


Corollary~\ref{cor:risk_gap} explains why radiomap prediction performance can change substantially when a model is transferred to new environments or BS configurations. The approximation gap reflects how well the learned mapping captures propagation behavior in the test domain. It can increase when training data provide limited coverage of the relevant layouts or configurations, or when different hidden propagation conditions change the conditional-mean radiomap. The uncertainty gap captures a complementary effect: the same observable representation may leave different amounts of propagation variability unresolved across domains. For example, a building-geometry map may provide a more complete description of propagation in a domain with relatively uniform materials than in one with substantial unrecorded material variation. The train-test gap therefore reflects both the transferability of the learned propagation relation and the adequacy of the available physical description in each domain.

In practice, learning is primarily supported by the available training radiomaps. Broader coverage of training data (relevant environments and BS configurations) helps the predictor learn propagation relations in the training domain, while effective predictor design determines how well these relations are extracted from finite samples. The uncertainty term points to a separate question: can additional environment or configuration information make the target radiomap more predictable? This motivates the following analysis of observation enrichment.

\subsection{Observation Enrichment and Conditional Uncertainty}
\label{subsec:observation_enrichment}

For radiomap blind prediction, observation enrichment means providing additional propagation-relevant information, such as adding material properties to a building-geometry map or supplying a detailed antenna pattern alongside a coarse beam descriptor. Such information may distinguish propagation conditions that were indistinguishable under the original input, thereby lowering the minimum achievable radiomap prediction error.

Let \(\tilde{\mathbf Z}\) denote an enriched observation from which the original observation can be recovered:
\begin{equation}
\mathbf Z=\psi\left(\tilde{\mathbf Z}
\right).
\label{eq:enrichment_recovery}
\end{equation}
Thus, \(\tilde{\mathbf Z}\) retains all information contained in \(\mathbf Z\) and may additionally reveal propagation-relevant factors.

For each domain \(d\in\{\mathrm{tr},\mathrm{te}\}\), define the conditional mean and conditional uncertainty under the enriched observation as
\begin{equation}
\tilde m_d^\star(\tilde{\mathbf z}):=\mathbb E_d\left[\mathbf Y\mid\tilde{\mathbf Z}=\tilde{\mathbf z}\right]
\label{eq:enriched_conditional_mean}
\end{equation}
and
\begin{equation}
\tilde v_d(\tilde{\mathbf z}):=\mathbb E_d\left[\left\|\mathbf Y-\tilde m_d^\star(\tilde{\mathbf z})\right\|_{\mathrm F}^{2}\mid\tilde{\mathbf Z}=\tilde{\mathbf z}\right].
\label{eq:enriched_conditional_uncertainty}
\end{equation}
Let
\begin{equation}
\tilde{\mathcal U}_d:=\mathbb E_d\left[\tilde v_d(\tilde{\mathbf Z})\right].
\label{eq:enriched_domain_uncertainty}
\end{equation}

\begin{proposition}[Monotonicity under observation enrichment]
\label{prop:observation_enrichment}
Suppose the involved random variables have finite second moments and \(\mathbf Z=\psi(\tilde{\mathbf Z})\). Then, for every domain \(d\in\{\mathrm{tr},\mathrm{te}\}\),
\begin{equation}
\begin{aligned}
v_d(\mathbf Z)
&-\mathbb E_d\left[\tilde v_d(\tilde{\mathbf Z})\mid\mathbf Z\right]\\
&=\mathbb E_d\left[\left\|\tilde m_d^\star(\tilde{\mathbf Z})-m_d^\star(\mathbf Z)\right\|_{\mathrm F}^{2}\mid\mathbf Z\right]\ge 0
\end{aligned}
\label{eq:conditional_enrichment_gain}
\end{equation}
almost surely. Consequently,
\begin{equation}
\begin{aligned}
\mathcal U_d-\tilde{\mathcal U}_d
&=\mathbb E_d\left[\left\|\tilde m_d^\star(\tilde{\mathbf Z})-m_d^\star(\mathbf Z)\right\|_{\mathrm F}^{2}\right]\ge 0.
\end{aligned}
\label{eq:observation_enrichment_gain}
\end{equation}
\end{proposition}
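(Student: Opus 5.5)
The proof is a conditional Pythagorean decomposition, essentially Theorem~\ref{thm:domain_risk_decomposition} applied conditionally on \(\mathbf Z\) with \(\tilde m_d^\star(\tilde{\mathbf Z})\) playing the role of the predictor.

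Two facts set it up. First, since \(\mathbf Z=\psi(\tilde{\mathbf Z})\), \(\mathbf Z\) is \(\sigma(\tilde{\mathbf Z})\)-measurable, so \(\sigma(\mathbf Z)\subseteq\sigma(\tilde{\mathbf Z})\). The tower property then gives \(\mathbb E_d[\tilde m_d^\star(\tilde{\mathbf Z})\mid\mathbf Z]=m_d^\star(\mathbf Z)\) almost surely. Second, by definition \(\mathbb E_d[\mathbf Y-\tilde m_d^\star(\tilde{\mathbf Z})\mid\tilde{\mathbf Z}]=0\). Finite second moments ensure all the quantities below are integrable. Conditional Jensen gives \(\mathbb E\|\tilde m_d^\star\|_{\mathrm F}^2\le\mathbb E\|\mathbf Y\|_{\mathrm F}^2<\infty\), so Cauchy–Schwarz justifies the cross terms.

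The main computation proceeds in three steps.

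\begin{enumerate}
\item Write \(\mathbf Y-m_d^\star(\mathbf Z)=(\mathbf Y-\tilde m_d^\star(\tilde{\mathbf Z}))+(\tilde m_d^\star(\tilde{\mathbf Z})-m_d^\star(\mathbf Z))\) and expand the squared Frobenius norm into two squared terms plus twice their inner product.

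\item Take \(\mathbb E_d[\cdot\mid\mathbf Z]\). The left side is \(v_d(\mathbf Z)\) by \eqref{eq:domain_conditional_uncertainty}.

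\item Handle the three resulting terms by conditioning first on \(\tilde{\mathbf Z}\) and then using the tower property down to \(\mathbf Z\).
\begin{itemize}
\item The first squared term becomes \(\mathbb E_d[\tilde v_d(\tilde{\mathbf Z})\mid\mathbf Z]\), since its conditional expectation given \(\tilde{\mathbf Z}\) is \(\tilde v_d(\tilde{\mathbf Z})\).
\item For the cross term, \(\tilde m_d^\star(\tilde{\mathbf Z})-m_d^\star(\mathbf Z)\) is \(\sigma(\tilde{\mathbf Z})\)-measurable and can be pulled out of \(\mathbb E_d[\cdot\mid\tilde{\mathbf Z}]\). What remains is the conditional mean of \(\mathbf Y-\tilde m_d^\star(\tilde{\mathbf Z})\), which is zero. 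So the cross term vanishes.
\item The second squared term stays as is.
\end{itemize}
\end{enumerate}

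This yields \eqref{eq:conditional_enrichment_gain}, and nonnegativity is immediate because the right side is the conditional expectation of a norm.

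Taking \(\mathbb E_d\) of both sides and applying the law of total expectation turns \(\mathbb E_d[v_d(\mathbf Z)]\) into \(\mathcal U_d\) and \(\mathbb E_d[\mathbb E_d[\tilde v_d(\tilde{\mathbf Z})\mid\mathbf Z]]\) into \(\tilde{\mathcal U}_d\). This gives \eqref{eq:observation_enrichment_gain}.

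The only delicate step is the vanishing of the cross term. It requires the measurability inclusion \(\sigma(\mathbf Z)\subseteq\sigma(\tilde{\mathbf Z})\), which is exactly where the recovery assumption \eqref{eq:enrichment_recovery} enters, together with integrability of the product. Both are addressed above: the inclusion follows from \(\mathbf Z=\psi(\tilde{\mathbf Z})\), and integrability follows from the finite-second-moment hypothesis via Cauchy–Schwarz. The identity \(\mathbb E_d[\tilde m_d^\star(\tilde{\mathbf Z})\mid\mathbf Z]=m_d^\star(\mathbf Z)\) is not even needed for the identity itself. It only confirms that the gain term equals the conditional variance of \(\tilde m_d^\star\) given \(\mathbf Z\), which is the interpretation the paper uses for the strict-reduction claim.
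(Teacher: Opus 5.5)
Your proposal is correct and matches the paper's proof: you add and subtract $\tilde m_d^\star(\tilde{\mathbf Z})$, kill the cross term by conditioning on $\tilde{\mathbf Z}$ (using that $m_d^\star(\mathbf Z)$ is $\sigma(\tilde{\mathbf Z})$-measurable because $\mathbf Z=\psi(\tilde{\mathbf Z})$), then condition on $\mathbf Z$ and take expectations. Your integrability justification via conditional Jensen and Cauchy--Schwarz supplies details the paper leaves implicit.
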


\begin{proof}
Because \(\mathbf Z=\psi(\tilde{\mathbf Z})\), \(m_d^\star(\mathbf Z)\) is also determined by \(\tilde{\mathbf Z}\). Expanding \[\mathbf Y-m_d^\star(\mathbf Z)=\mathbf Y-\tilde m_d^\star(\tilde{\mathbf Z})+\tilde m_d^\star(\tilde{\mathbf Z})-m_d^\star(\mathbf Z)\]
and conditioning first on \(\tilde{\mathbf Z}\) eliminates the cross term because \[\mathbb E_d\left[\mathbf Y-\tilde m_d^\star(\tilde{\mathbf Z})\mid \tilde{\mathbf Z}\right]=0.\] Conditioning subsequently on \(\mathbf Z\) proves \eqref{eq:conditional_enrichment_gain}. Taking expectations proves
\eqref{eq:observation_enrichment_gain}.
\end{proof}

Proposition~\ref{prop:observation_enrichment} gives a precise criterion for the value of additional physical observations: the uncertainty reduction equals the expected squared change in the conditional-mean radiomap. Thus, additional information is useful for lowering the prediction limit when it distinguishes propagation conditions that were previously averaged together. For example, recording material properties can distinguish attenuation behavior among scenes with the same geometry, while a detailed antenna pattern can distinguish coverage variations hidden by a coarse beam descriptor. The relevant consideration is therefore the additional propagation information revealed by the representation.

Together, these results distinguish the roles of training-data acquisition, observation design, and predictor learning. More training radiomaps and broader coverage of environment and configuration variations support learning the predictable propagation relation; richer observations can make previously unresolved radiomap variations predictable. With the available observations fixed, the remaining task is to exploit their propagation information effectively from finite training data. Propagation priors offer a natural source of guidance by making selected physical relations explicit, motivating the following analysis of their benefits and potential biases.

\section{Propagation-Prior Guidance and Bias}
\label{sec:prior_under_incomplete_observation}

\subsection{Propagation Guidance for Predictor Approximation}
\label{subsec:propagation_prior_definition}

With the available environment and BS information fixed,
radiomap predictor design aims to reduce the approximation term \(\mathcal A_d(f)\). Given training data, minimizing the training-domain population risk is equivalent to reducing
\begin{equation}
\mathcal A_{\mathrm{tr}}(f)=\mathbb E_{\mathrm{tr}}\left[\left\|f(\mathbf Z)-m_{\mathrm{tr}}^\star(\mathbf Z)\right\|_{\mathrm F}^{2}\right],
\label{eq:training_approximation_objective}
\end{equation}
that is, learning a predictor that approximates the training-domain conditional target \(m_{\mathrm{tr}}^\star\). In practice, empirical training error can often be substantially reduced, for example, by increasing model capacity or improving optimization. However, a small empirical training error does not necessarily imply that \(f\) closely approximates \(m_{\mathrm{tr}}^\star\) at the population level, much less that it approximates \(m_{\mathrm{te}}^\star\) well in the test domain. To identify a possible source of test-domain guidance, recall that the conditional targets in both domains can be written as
\begin{equation}
m_{d}^\star(\mathbf z)=\mathbb E_{\mathbf H\sim p_{d}(\cdot\mid\mathbf z)}\left[F(\mathbf z,\mathbf H)
\right],\qquad d\in\{\mathrm{tr},\mathrm{te}\}.\label{eq:domain_target_propagation_view}
\end{equation}
Although the conditional distributions of the hidden factors may differ between the two domains, both targets are governed by the same physical propagation mechanism \(F\). Partial knowledge of \(F\) can therefore provide domain-relevant guidance for learning relations that may transfer from the training samples to unseen test inputs. This makes propagation knowledge a natural inductive prior for improving test-domain approximation.

In this paper, \(\Phi\) denotes a \emph{propagation prior}, namely, partial structural knowledge of \(F\), such as distance-dependent attenuation, antenna radiation, LoS propagation, blockage, reflection, diffraction, or scattering. It may be incorporated through input features, model structures, training losses, or their combinations. The following analysis examines how these different prior-injection mechanisms affect the training-domain learning problem, including their population solutions and possible induced biases.

\subsection{Prior-Injection Forms and Induced Biases}
\label{subsec:prior_forms}

\subsubsection{Feature-level prior}
For radiomap blind prediction, feature-level injection converts available environment and BS information into propagation-related inputs, such as geometry-derived blockage indicators, distance-based attenuation maps, or beam-gain maps computed from recorded antenna and beamforming parameters. For a fixed prior \(\Phi\), this
construction is represented as

\begin{equation}
\mathbf U=T_{\Phi}(\mathbf Z),
\label{eq:feature_prior_representation}
\end{equation}
where \(T_\Phi\) makes selected propagation relations explicit to the radiomap predictor. The resulting features may replace the original observations or be supplied alongside them, yielding \emph{prior-only} and \emph{prior-augmented} inputs, respectively.

\textit{1) Prior-only input:}
Consider a predictor \(g(\mathbf U)\) that uses only \(\mathbf U=T_{\Phi}(\mathbf Z)\). Under training-domain squared loss, its population conditional target is
\begin{equation}
\begin{aligned}
m_{\Phi,\mathrm{tr}}^\star(\mathbf u)
&:=\mathbb E_{\mathrm{tr}}\left[\mathbf Y\mid \mathbf U=\mathbf u\right]=\mathbb E_{\mathrm{tr}}\left[\mathbb E_{\mathrm{tr}}\left[\mathbf Y\mid \mathbf Z\right]\middle|\mathbf U=\mathbf u\right]
\\&=\mathbb E_{\mathrm{tr}}\left[m_{\mathrm{tr}}^\star(\mathbf Z)\mid T_{\Phi}(\mathbf Z)=\mathbf u\right].
\end{aligned}
\label{eq:feature_only_conditional_target}
\end{equation}
The second equality follows because \(\mathbf U=T_{\Phi}(\mathbf Z)\) is measurable with respect to \(\mathbf Z\).

To compare this target with \(m_{\mathrm{tr}}^\star(\mathbf z)\) on the original observable space, define its pullback to the \(\mathbf Z\)-space as
\begin{equation}
\bar m_{\Phi,\mathrm{tr}}^\star(\mathbf z)
:=m_{\Phi,\mathrm{tr}}^\star\left(T_{\Phi}(\mathbf z)\right)=m_{\Phi,\mathrm{tr}}^\star\left(\mathbf u\right).
\label{eq:feature_only_pulled_back_target}
\end{equation}
The information-compression bias induced by replacing \(\mathbf Z\) with \(T_{\Phi}(\mathbf Z)\) is then
\begin{equation}
\mathcal B_{\Phi,\mathrm{tr}}^{\mathrm{feat}}:=\mathbb E_{\mathrm{tr}}\left[\left\|m_{\mathrm{tr}}^\star(\mathbf Z)-\bar m_{\Phi,\mathrm{tr}}^\star(\mathbf Z)\right\|_{\mathrm F}^{2}\right].
\label{eq:feature_only_compression_bias}
\end{equation}
This bias is zero only if the information retained by \(T_{\Phi}(\mathbf Z)\) is sufficient to determine the original conditional target. Equivalently, there must exist a measurable function \(q_{\Phi}\) such that
\begin{equation}
m_{\mathrm{tr}}^\star(\mathbf Z)=q_{\Phi}\left(T_{\Phi}(\mathbf Z)\right)
\label{eq:feature_only_sufficiency}
\end{equation}
almost surely under the training distribution. Otherwise, \(\mathcal B_{\Phi,\mathrm{tr}}^{\mathrm{feat}}>0\).

For any predictor \(g(\mathbf U)\), the prediction error can be expanded as
\begin{equation}
\begin{aligned}
\mathbf Y-g(\mathbf U)
&=\mathbf Y-m_{\mathrm{tr}}^\star(\mathbf Z)+m_{\mathrm{tr}}^\star(\mathbf Z)-m_{\Phi,\mathrm{tr}}^\star(\mathbf U)
\\&\quad+m_{\Phi,\mathrm{tr}}^\star(\mathbf U)-g(\mathbf U).
\end{aligned}
\label{eq:feature_only_error_expansion}
\end{equation}
The first component is orthogonal in expectation to the remaining two because
\begin{equation}
\mathbb E_{\mathrm{tr}}\left[\mathbf Y-m_{\mathrm{tr}}^\star(\mathbf Z)\mid \mathbf Z \right]=0.
\label{eq:feature_original_residual_zero_mean}
\end{equation}
Moreover,
\begin{equation}
\begin{aligned}
&
\mathbb E_{\mathrm{tr}}\left[m_{\mathrm{tr}}^\star(\mathbf Z)-m_{\Phi,\mathrm{tr}}^\star(\mathbf U)\mid \mathbf U\right]
\\
&=\mathbb E_{\mathrm{tr}}\left[m_{\mathrm{tr}}^\star(\mathbf Z)\mid\mathbf U\right]-m_{\Phi,\mathrm{tr}}^\star(\mathbf U)=
0,
\end{aligned}
\label{eq:feature_compression_zero_mean}
\end{equation}
so the second component is also orthogonal in expectation to the third. Consequently,
\begin{equation}
\begin{aligned}
&
\mathbb E_{\mathrm{tr}} \left[\left\|\mathbf Y-g(\mathbf U)\right\|_{\mathrm F}^{2}\right]=\mathcal U_{\mathrm{tr}}+\mathcal B_{\Phi,\mathrm{tr}}^{\mathrm{feat}}
\\
&\quad+\mathbb E_{\mathrm{tr}}\left[\left\|m_{\Phi,\mathrm{tr}}^\star(\mathbf U)-g(\mathbf U)\right\|_{\mathrm F}^{2}\right].
\end{aligned}
\label{eq:feature_only_risk_decomposition}
\end{equation}
The three terms respectively represent the irreducible conditional uncertainty under the original observation, the additional error caused by compressing \(\mathbf Z\) into \(T_{\Phi}(\mathbf Z)\), and the predictor-dependent error relative to the prior-only conditional target. In particular, even the population-optimal predictor based only on \(\mathbf U\) has minimum risk \(\mathcal U_{\mathrm{tr}}+\mathcal B_{\Phi,\mathrm{tr}}^{\mathrm{feat}}\).

\textit{2) Prior-augmented input:}
Consider instead the augmented representation
\begin{equation}
\mathbf Z_{\Phi}=\left[\mathbf Z,T_{\Phi}(\mathbf Z)\right].
\label{eq:feature_prior_augmented_input}
\end{equation}
Because the original observation is retained and \(T_{\Phi}(\mathbf Z)\) is deterministically generated from it, \(\mathbf Z_{\Phi}\) and \(\mathbf Z\) contain the same population-level information. Therefore,
\begin{equation}
\begin{aligned}
\mathbb E_{\mathrm{tr}}\left[\mathbf Y \mid \mathbf Z, T_{\Phi}(\mathbf Z) \right]&=\mathbb E_{\mathrm{tr}}\left[\mathbf Y\mid\mathbf Z\right]=m_{\mathrm{tr}}^\star(\mathbf Z).
\end{aligned}
\label{eq:feature_augmentation_same_target}
\end{equation}
Thus, retaining the original observations preserves the optimal radiomap target and \(\mathcal U_{\mathrm{tr}}\). Compared with the prior-only input, however, it presents the model with a higher-dimensional input space, so identifying a mapping close to \(m_{\mathrm{tr}}^\star\) may require a more demanding finite-sample search.

\subsubsection{Model-level prior}

Model-level injection embeds propagation knowledge $F$ into how a radiomap predictor processes and combines information. One example is the adoption of CNN-based architectures\cite{radiounet}, which reflects the prior belief that propagation patterns exhibit local spatial dependence and approximate translation equivariance. Another is to decompose signal propagation into distinct components or regimes, learn them through dedicated branches, and aggregate their contributions to form the final prediction\cite{ref17}.

Let \(\mathcal H_{\Phi}\) denote the effective hypothesis class induced by the model-level prior. Because the predictor still operates on \(\mathbf Z\), its training-domain conditional target remains \(m_{\mathrm{tr}}^\star(\mathbf Z)\). The prior instead restricts the functions available for approximating this target. The resulting structural approximation bias is
\begin{equation}
\mathcal B_{\Phi,\mathrm{tr}}^{\mathrm{model}}
:=\inf_{f\in\mathcal H_{\Phi}}\mathbb E_{\mathrm{tr}}\left[\left\|f(\mathbf Z)-m_{\mathrm{tr}}^\star(\mathbf Z)\right\|_{\mathrm F}^{2}\right].
\label{eq:model_prior_train_bias}
\end{equation}
Accordingly, the minimum attainable training-domain population risk is
\begin{equation}
\inf_{f\in\mathcal H_{\Phi}}R_{\mathrm{tr}}(f)=\mathcal U_{\mathrm{tr}}+\mathcal B_{\Phi,\mathrm{tr}}^{\mathrm{model}}.
\label{eq:model_prior_minimum_train_risk}
\end{equation}
The bias is zero when functions in \(\mathcal H_{\Phi}\) can approximate the conditional-mean radiomap mapping \(m_{\mathrm{tr}}^\star\) arbitrarily well under the training distribution. Otherwise, the imposed structure produces an approximation error that remains even with unlimited data and ideal optimization.

\subsubsection{Loss-level prior}

Without loss of generality, we consider radiomap predictors trained with an additional penalty that encourages consistency with physical propagation mechanisms for loss-level injection. Let
\begin{equation}
\Omega_{\Phi,\mathrm{tr}}(f):=\mathbb E_{\mathrm{tr}}\left[\omega_{\Phi}(f;\mathbf Z,\mathbf Y)\right]
\label{eq:population_prior_penalty}
\end{equation}
denote its training-domain population form. Examples include physical-consistency penalties, propagation-guided auxiliary supervision, and smoothness or boundary constraints. The resulting population objective is
\begin{equation}
\mathcal J_{\lambda,\mathrm{tr}}^{\mathrm{pop}}(f)=R_{\mathrm{tr}}(f)+\lambda\Omega_{\Phi,\mathrm{tr}}(f),\qquad f\in\mathcal H,
\label{eq:loss_prior_population_objective}
\end{equation}
where \(\lambda\geq0\) controls the strength of the prior.

The loss-level prior introduces a soft preference when selecting a predictor from \(\mathcal H\). Because \(\mathcal U_{\mathrm{tr}}\) is independent of \(f\), the corresponding population solution satisfies
\begin{equation}
f_{\lambda,\mathrm{tr}}^\star \in \arg\min_{f\in\mathcal H} \left\{\mathbb E_{\mathrm{tr}}\left[\left\|f(\mathbf Z)-m_{\mathrm{tr}}^\star(\mathbf Z)\right\|_{\mathrm F}^{2}\right]+\lambda\Omega_{\Phi,\mathrm{tr}}(f)\right\}.
\label{eq:loss_prior_train_optimum}
\end{equation}

To isolate the bias induced by this preference from the intrinsic approximation limitation of \(\mathcal H\), define
\begin{equation}
\begin{aligned}
\mathcal B_{\lambda,\mathrm{tr}}^{\mathrm{loss}}
&:=\mathbb E_{\mathrm{tr}}\left[\left\|f_{\lambda,\mathrm{tr}}^\star(\mathbf Z)-m_{\mathrm{tr}}^\star(\mathbf Z)\right\|_{\mathrm F}^{2}\right]
\\
&\quad-\inf_{f\in\mathcal H}\mathbb E_{\mathrm{tr}}\left[\left\|f(\mathbf Z)-m_{\mathrm{tr}}^\star(\mathbf Z)\right\|_{\mathrm F}^{2}\right]\geq 0.
\end{aligned}
\label{eq:loss_prior_train_bias}
\end{equation}
This bias is zero when the loss-guided solution remains a best approximation to \(m_{\mathrm{tr}}^\star\) within \(\mathcal H\). In particular, if \(m_{\mathrm{tr}}^\star\in\mathcal H\) and also minimizes \(\Omega_{\Phi,\mathrm{tr}}\), the original population optimum is preserved. Otherwise, an incompatible penalty with fixed nonzero \(\lambda\) may shift the selected solution away from the conditional target, even with unlimited data and ideal optimization.

\textbf{Discussion:} For radiomap blind prediction, propagation priors can facilitate learning by exposing radiation-related features, organizing model components of the predictor, or designing a loss to guide physical consistency during training. Their potential biases arise
through different mechanisms: feature replacement may discard informative observation details, architectural restrictions may limit the representable mapping, and incompatible loss penalties may favor inaccurate propagation relations. Although feature augmentation preserves the conditional-mean target, the higher input dimension may make identifying an effective radiomap mapping more demanding under finite data and limited optimization. In principle, each injection mechanism can avoid population-level bias if its design preserves the information, representational capacity, or loss compatibility needed to recover the conditional-mean radiomap. In practice, however, these designs are chosen using incomplete observations, finite training data, and selected propagation assumptions; their compatibility with the optimal mapping under incomplete observation is generally difficult to ensure. This motivates a predictor design that retains propagation guidance while allowing data-driven correction of its remaining predictable discrepancies.

\section{RadioDecomp: Correctable Propagation-Prior Learning}
\label{sec:radiodecomp}

\subsection{RadioDecomp Formulation}
\label{subsec:radiodecomp_formulation}
For radiomap blind prediction, let \(b_\Phi(\mathbf z)\) denote a prior-guided estimate of the radiomap generated from the observable environment and BS configuration through the injection mechanisms in Section~\ref{sec:prior_under_incomplete_observation}. The target remains the conditional-mean radiomap \(m^\star(\mathbf z)\). The base estimate may depart from this target because of information compression, structural restrictions, incompatible prior losses, or finite-sample learning. To correct these potential deviations while retaining the encoded propagation guidance, we propose \emph{RadioDecomp}, which augments the prior-guided base with a data-driven residual correction:
\begin{equation}
\hat{\mathbf y}=b_{\Phi}(\mathbf z)+r\left(\mathbf z,b_{\Phi}(\mathbf z)\right),
\label{eq:radiodecomp_general}
\end{equation}
The residual branch receives \(\mathbf z\) to exploit observable cues not captured by the base and \(b_{\Phi}(\mathbf z)\) to condition the correction on the prediction being corrected. Although \(b_{\Phi}(\mathbf z)\) adds no new observable information because it is deterministic given \(\mathbf z\), it serves as a propagation-guided prediction anchor. The base thus provides prior guidance, while the residual branch corrects its data-supported discrepancy.

To formalize this correctability, we consider a fixed prior-guided base predictor. Let \(\mathcal H_R\) denote the residual hypothesis class and define \[\mathcal H_{\mathrm{RD}}(b_{\Phi}):=\left\{f:f(\mathbf z)=b_{\Phi}(\mathbf z)+r\left(\mathbf z,b_{\Phi}(\mathbf z)\right),\quad r\in\mathcal H_R\right\}.\]

\begin{proposition}[Residual correctability]
\label{prop:radiodecomp_correctability}
If the zero function belongs to the residual class, \(0\in\mathcal H_R\), then, for any \(d\in\{\mathrm{tr},\mathrm{te}\}\),
\begin{equation}
\begin{aligned}
&\inf_{f\in\mathcal H_{\mathrm{RD}}(b_{\Phi})}\mathbb E_d\left[\left\|f(\mathbf Z)-m_d^\star(\mathbf Z)\right\|_{\mathrm F}^{2}\right]
\\
&\quad\leq \mathbb E_d\left[\left\|b_{\Phi}(\mathbf Z)-m_d^\star(\mathbf Z)\right\|_{\mathrm F}^{2}\right].
\end{aligned}
\label{eq:radiodecomp_no_worse_than_base}
\end{equation}

Moreover, under squared loss, the pointwise population target of the residual branch in domain \(d\) is
\begin{equation}
r_{\Phi,d}^\star\left(\mathbf z,b_{\Phi}(\mathbf z)\right)=m_d^\star(\mathbf z)-b_{\Phi}(\mathbf z).
\label{eq:radiodecomp_ideal_residual}
\end{equation}
\end{proposition}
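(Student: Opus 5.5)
The plan is to handle the two claims of Proposition~\ref{prop:radiodecomp_correctability} separately. Both follow from elementary properties of the infimum and of squared-loss minimization.

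For the inequality \eqref{eq:radiodecomp_no_worse_than_base}, we exhibit a particular member of \(\mathcal H_{\mathrm{RD}}(b_{\Phi})\) that reproduces the base. Since \(0\in\mathcal H_R\), choosing \(r\equiv 0\) gives \(f_0(\mathbf z)=b_{\Phi}(\mathbf z)+0=b_{\Phi}(\mathbf z)\), so \(f_0\in\mathcal H_{\mathrm{RD}}(b_{\Phi})\). The infimum over the class is bounded above by the value at any member. Evaluating the approximation error \(\mathbb E_d[\|f_0(\mathbf Z)-m_d^\star(\mathbf Z)\|_{\mathrm F}^2]\) at \(f_0\) yields exactly the right-hand side. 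This argument holds separately for each \(d\in\{\mathrm{tr},\mathrm{te}\}\), because no property of the domain is used. If the right-hand side is infinite, the inequality is trivial. Otherwise, finiteness of the second moment of \(b_{\Phi}(\mathbf Z)\) and of \(m_d^\star(\mathbf Z)\) (the latter inherited from \(\mathbf Y\)) makes the expectation well defined. By Theorem~\ref{thm:domain_risk_decomposition}, the same inequality transfers to the risk \(R_d\) by adding \(\mathcal U_d\) to both sides. We will note this as a remark.

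For the residual target \eqref{eq:radiodecomp_ideal_residual}, the plan is to rewrite the squared loss of the composite predictor as a squared loss for the residual branch. We have \(\mathbf Y-f(\mathbf Z)=(\mathbf Y-b_{\Phi}(\mathbf Z))-r(\mathbf Z,b_{\Phi}(\mathbf Z))\). The residual branch therefore regresses the shifted response \(\mathbf Y-b_{\Phi}(\mathbf Z)\) on its input \((\mathbf Z,b_{\Phi}(\mathbf Z))\). Because \(b_{\Phi}(\mathbf Z)\) is a deterministic function of \(\mathbf Z\), the \(\sigma\)-algebra generated by \((\mathbf Z,b_{\Phi}(\mathbf Z))\) equals that generated by \(\mathbf Z\). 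Pointwise minimization of the conditional squared loss, exactly as in \eqref{eq:population_conditional_mean}, then gives the minimizer
\[\mathbb E_d[\mathbf Y-b_{\Phi}(\mathbf Z)\mid \mathbf Z=\mathbf z]=m_d^\star(\mathbf z)-b_{\Phi}(\mathbf z),\]
using that \(b_{\Phi}(\mathbf z)\) can be pulled out of the conditional expectation. An equivalent route is to apply Theorem~\ref{thm:domain_risk_decomposition} to \(f=b_{\Phi}+r\). This gives \(R_d(f)=\mathbb E_d[\|r-(m_d^\star-b_{\Phi})\|_{\mathrm F}^2]+\mathcal U_d\), which is minimized, with value \(\mathcal U_d\), exactly when \(r=m_d^\star-b_{\Phi}\) almost surely.

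The only step that needs care is this identification of the conditioning information. The residual's nominal input \((\mathbf z,b_{\Phi}(\mathbf z))\) must neither add nor remove information relative to \(\mathbf z\); otherwise its target would not be expressible through \(m_d^\star\). Measurability of \(b_{\Phi}\) settles this. We will also note that the target is unique only \(p_d\)-almost everywhere, and that it is attained within \(\mathcal H_{\mathrm{RD}}\) only if \(\mathcal H_R\) contains this function. The proposition claims the pointwise population target, not attainability, so no further assumption is needed.
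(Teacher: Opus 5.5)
Your proposal is correct and follows the paper's proof. The inequality comes from taking $r\equiv 0$, so that $b_{\Phi}\in\mathcal H_{\mathrm{RD}}(b_{\Phi})$; the residual target comes from pulling the $\mathbf Z$-determined base out of the conditional expectation and using that the conditional mean minimizes conditional squared loss. Your added points (that $(\mathbf Z,b_{\Phi}(\mathbf Z))$ and $\mathbf Z$ generate the same $\sigma$-algebra, almost-everywhere uniqueness, and the equivalent route via Theorem~\ref{thm:domain_risk_decomposition}) are sound and make explicit what the paper leaves implicit.
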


\begin{proof}
Because \(0\in\mathcal H_R\), choosing \(r\equiv 0\) recovers \(b_{\Phi}\). Hence, \(b_{\Phi}\in\mathcal H_{\mathrm{RD}}(b_{\Phi})\), which proves \eqref{eq:radiodecomp_no_worse_than_base}.

For a fixed base predictor, \(b_{\Phi}(\mathbf Z)\) is deterministic given \(\mathbf Z\). Therefore, for any \(d\in\{\mathrm{tr},\mathrm{te}\}\),
\begin{equation}
\begin{aligned}
&\mathbb E_d\left[\mathbf Y-b_{\Phi}(\mathbf Z)\mid \mathbf Z=\mathbf z\right]=\mathbb E_d\left[\mathbf Y\mid\mathbf Z=\mathbf z\right]-b_{\Phi}(\mathbf z)
\\
&\quad=m_d^\star(\mathbf z)-b_{\Phi}(\mathbf z).
\end{aligned}
\end{equation}
Because the conditional mean minimizes conditional squared loss,
\eqref{eq:radiodecomp_ideal_residual} follows.
\end{proof}

Proposition~\ref{prop:radiodecomp_correctability} shows that the zero residual recovers \(b_{\Phi}\), while the population-optimal residual in domain \(d\) targets \(m_d^\star-b_{\Phi}\). RadioDecomp thus preserves the base prediction while making its predictable domain-specific discrepancy correctable.

\subsection{Representative Instantiations}
\label{subsec:radiodecomp_instantiations}
RadioDecomp is not tied to a specific propagation prior or base architecture. To evaluate its effectiveness under different prior-guided predictors, we develop two instantiations that examine complementary correction needs in radiomap blind prediction. RadioFR examines residual correction under feature augmentation, whereas RadioLSR examines correction under more restrictive propagation-guided prediction. RadioFR jointly processes the available representations to assess correction when feature augmentation does not introduce information-compression bias. RadioLSR instead uses a LoS-Shadow partition and restricted branch inputs, providing a stricter propagation prior setting. These instantiations demonstrate how the general base-plus-residual framework can be applied to distinct forms of prior injection. We first introduce the observable and propagation-prior representations shared by the two instantiations.

\subsubsection{Common Representations and Two-Stage Training} \label{subsec:radiodecomp_common_inputs}

For each radiomap sample, the observable representation contains an environment description and a BS configuration. The environment is represented by a height map $\mathbf G\in\mathbb R^{H\times W}$, while the BS configuration specifies the antenna array, carrier frequency, transmit power, and beamforming parameters. Based on the observable environment and configuration, several deterministic propagation-related representations are constructed. These representations expose physically meaningful relations that would otherwise need to be inferred entirely from finite training samples.

First, a beam map $\mathbf B \in \mathbb R^{H\times W}$ represents the analytically generated LoS beamforming power. Let \(k\in\{1,\ldots,HW\}\) denote a spatial grid point. The beam-map value is defined as \cite{li2026u6gxlmimoradiomapprediction}
\begin{equation}
B_k=\frac{\lambda^2}{(4\pi)^2}P_t\left|\mathbf w^H\mathbf h_k^{\mathrm{LoS}}\right|^2,
\label{eq:rd_beam_map}
\end{equation}
where \(\lambda\) is the wavelength, \(P_t\) is the transmit power, \(\mathbf w\) is the beamforming vector, and \(\mathbf h_k^{\mathrm{LoS}}\) denotes the analytically generated LoS channel at grid point \(k\). The beam map therefore exposes the configuration-dependent direct-coverage pattern.

Second, two directional edge maps are extracted from the height map:
\begin{equation}
\begin{aligned}
\mathbf E^{(x)}
&=\mathcal D_x\left(\mathbf G\right),
\\
\mathbf E^{(y)}&=\mathcal D_y\left(\mathbf G\right),
\end{aligned}
\label{eq:rd_edge_maps}
\end{equation}
where \(\mathcal D_x(\cdot)\) and \(\mathcal D_y(\cdot)\) denote directional edge operators along the two spatial axes. The edge maps expose geometric boundaries that may affect blockage, reflection, and diffraction.

Third, a blockage score map $\mathbf S \in \mathbb R^{H\times W}$ is constructed from the observable geometry and BS position. For grid point \(k\), the BS-grid line segment is sampled at \(M\) intermediate locations:
\begin{equation}
S_k=\frac{1}{M} \sum_{m=1}^{M}\left[h_{\mathrm{bld}}\left(\mathbf p_{\mathrm{BS}}+t_m\left(\mathbf p_k-\mathbf p_{\mathrm{BS}}\right)\right)-h_{\mathrm{ray}}\left(t_m\right)\right]_+,
\label{eq:rd_blockage_score}
\end{equation}
where \(\mathbf p_{\mathrm{BS}}\) and \(\mathbf p_k\) denote the horizontal positions of the BS and grid point \(k\), respectively, \(\{t_m\}_{m=1}^{M}\subset(0,1)\) are interpolation factors, \(h_{\mathrm{bld}}(\cdot)\) denotes the building height, \(h_{\mathrm{ray}}(t_m)\) denotes the height of the direct BS-grid segment, and \([a]_+=\max(a,0)\). The blockage score is clipped and normalized before being supplied to the network.

Prediction is performed only over the valid non-building region. Let $\mathbf M_{\mathrm{valid}} \in \{0,1\}^{H\times W}$ denote the corresponding binary mask. The propagation-feature representation is written as
\begin{equation}
\mathbf X_{\Phi}=\left[\mathbf B,\mathbf G,\mathbf E^{(x)},\mathbf E^{(y)},\mathbf S\right].
\label{eq:rd_common_feature_tensor}
\end{equation}

The beam map, directional edge maps, and blockage score are deterministic transformations of the available environment and configuration information. Therefore, they do not constitute genuine observation enrichment in the sense of Proposition~\ref{prop:observation_enrichment}. Instead, they reorganize the original observation into propagation-related representations and serve as feature-level propagation priors. RadioFR and RadioLSR are constructed from the same underlying observable environment and configuration information. However, they organize and supply the above representations differently according to their respective prior-guided base structures.

For subsequent training, let \(\mathbf y\) denote the normalized ground-truth radiomap. For a prediction \(\mathbf a\), target \(\mathbf b\), and binary mask \(\mathbf M\), define the masked squared \(L_2\) loss as
\begin{equation}
\ell_{\mathrm{mask}}^{(2)}\left(\mathbf a,\mathbf b;\mathbf M\right)=\frac{\left\|\left(\mathbf a-\mathbf b\right)\odot \mathbf M\right\|_{\mathrm F}^{2}}{\max\left(\left\|\mathbf M \right\|_1,1\right)}.
\label{eq:radiodecomp_masked_l2}
\end{equation}

Both instantiations follow a two-stage training procedure. The prior-guided base is first trained using its model-specific objective and then frozen. Only the residual branch is subsequently optimized against the difference between the training radiomap and the frozen base prediction under the masked squared \(L_2\) loss. All prediction branches in RadioFR and RadioLSR use the common U-Net backbone summarized in Table~\ref{tab:radiodecomp_unet_backbone}, where \(C_b\) denotes the initial channel width of each branch.

\subsubsection{RadioFR (Feature-Residual Prediction)}
\label{subsec:radiofr}
RadioFR realizes RadioDecomp with a monolithic network that jointly uses all propagation-related features
to predict the radiomap. These feature representation in \eqref{eq:rd_common_feature_tensor} is directly supplied to a U-Net:
\begin{equation}
b_{\mathrm F}\left(\mathbf z\right)=f_{\mathrm F}\left(\mathbf X_{\Phi}\right)\odot \mathbf M_{\mathrm{valid}},
\label{eq:radiofr_base}
\end{equation}
Here, \(f_{\mathrm F}\), a monolithic U-Net, jointly processes all input channels. Relative to the height-map and beam-map inputs, adding edge and blockage features preserves the available information and does not introduce compression bias. Nevertheless, finite capacity, limited training data, and imperfect optimization may leave predictable radiomap errors. RadioFR therefore tests whether residual correction remains useful in this general feature-augmentation setting. The residual branch receives the same features together with the base radiomap:

\begin{equation}
\mathbf X_{\mathrm{R,F}}=\left[\mathbf B,\mathbf G,\mathbf E^{(x)},\mathbf E^{(y)},\mathbf S,b_{\mathrm F}\left(\mathbf z
\right)\right].
\label{eq:radiofr_residual_input}
\end{equation}

The residual prediction is
\begin{equation}
\hat{\mathbf r}_{\mathrm F}=g_{\mathrm F}\left(\mathbf X_{\mathrm{R,F}}\right)\odot \mathbf M_{\mathrm{valid}},
\label{eq:radiofr_residual}
\end{equation}
where \(g_{\mathrm F}\) is implemented using another U-Net.

The final RadioFR prediction is
\begin{equation}
\hat{\mathbf y}_{\mathrm F}=b_{\mathrm F}\left(\mathbf z\right)+\hat{\mathbf r}_{\mathrm F}.
\label{eq:radiofr_final}
\end{equation}

In the first training stage, the feature-guided base is optimized using
\begin{equation}
\mathcal L_{\mathrm{base,F}}=\ell_{\mathrm{mask}}^{(2)}\left(b_{\mathrm F}(\mathbf z),\mathbf y;\mathbf M_{\mathrm{valid}}\right).
\label{eq:radiofr_base_loss}
\end{equation}

In the second stage, the trained base is frozen, and the residual target is defined as
\begin{equation}
\mathbf y_{\mathrm{res,F}} =\mathbf y-b_{\mathrm F}(\mathbf z).
\label{eq:radiofr_residual_target}
\end{equation}
Only the residual branch is then optimized using
\begin{equation}
\mathcal L_{\mathrm{res,F}}=\ell_{\mathrm{mask}}^{(2)}\left(\hat{\mathbf r}_{\mathrm F},\mathbf y_{\mathrm{res,F}};\mathbf M_{\mathrm{valid}}\right).
\label{eq:radiofr_residual_loss}
\end{equation}

\subsubsection{RadioLSR: LoS-Shadow-Residual Prediction}
\label{subsec:radiolsr}
RadioLSR examines a more restrictive propagation-guided base for radiomap blind prediction. It partitions the prediction region using the blockage score and assigns different inputs to the two branches: the beam map for the LoS region, and the beam map together with the blockage score for the Shadow region. This design emphasizes direct-coverage and blockage trends while limiting the geometric information available to the base prediction. Based on the blockage score, the LoS and Shadow masks are defined as
\begin{equation}
\begin{aligned}
\mathbf M_{\mathrm{LoS}}
&=\mathbb I\left(\mathbf S\le \epsilon \right) \odot \mathbf M_{\mathrm{valid}},
\\
\mathbf M_{\mathrm{Shd}}&=\mathbb I\left(\mathbf S> \epsilon \right) \odot \mathbf M_{\mathrm{valid}},
\end{aligned}
\label{eq:radiolsr_masks}
\end{equation}
where \(\epsilon\) is the blockage-score threshold. The two masks satisfy
\begin{equation}
\mathbf M_{\mathrm{LoS}}+\mathbf M_{\mathrm{Shd}}=\mathbf M_{\mathrm{valid}}.
\label{eq:radiolsr_mask_partition}
\end{equation}

The LoS branch uses the beam map to estimate the direct-coverage trend:
\begin{equation}
\hat{\mathbf y}_{\mathrm{LoS}}=f_{\mathrm{LoS}}\left(\mathbf B
\right)\odot\mathbf M_{\mathrm{LoS}},
\label{eq:radiolsr_los_branch}
\end{equation}
where \(f_{\mathrm{LoS}}\) is implemented using a U-Net.

The Shadow branch jointly uses the beam map and blockage score:
\begin{equation}
\hat{\mathbf y}_{\mathrm{Shd}}=f_{\mathrm{Shd}}\left(\left[\mathbf B,\mathbf S\right]\right)\odot \mathbf M_{\mathrm{Shd}},
\label{eq:radiolsr_shadow_branch}
\end{equation}
where \(f_{\mathrm{Shd}}\) is another U-Net.

The LoS-Shadow structured base is then given by
\begin{equation}
\begin{aligned}
b_{\mathrm{LS}}\left(\mathbf z\right)
&=\hat{\mathbf y}_{\mathrm{LoS}}+\hat{\mathbf y}_{\mathrm{Shd}}
\\
&=f_{\mathrm{LoS}}\left(\mathbf B\right)\odot \mathbf M_{\mathrm{LoS}}+f_{\mathrm{Shd}}\left(\left[\mathbf B,\mathbf S\right]\right)\odot \mathbf M_{\mathrm{Shd}}.
\end{aligned}
\label{eq:radiolsr_base}
\end{equation}

The two branches are supervised by the total radiomap within their respective regions. However, their restricted inputs may miss geometry-dependent coverage variations, including those associated with reflection and diffraction. The residual branch therefore receives the height and edge maps alongside the beam map and base prediction to recover predictable spatial details unavailable to the base:
\begin{equation}
\mathbf X_{\mathrm{R,LS}}=\left[\mathbf B,\mathbf G,\mathbf E^{(x)},\mathbf E^{(y)},b_{\mathrm{LS}}(\mathbf z)\right].
\label{eq:radiolsr_residual_input}
\end{equation}

The residual prediction is
\begin{equation}
\hat{\mathbf r}_{\mathrm{LS}}=g_{\mathrm{LS}}\left(\mathbf X_{\mathrm{R,LS}}\right)\odot \mathbf M_{\mathrm{valid}},
\label{eq:radiolsr_residual}
\end{equation}
where \(g_{\mathrm{LS}}\) is implemented using a U-Net.

The final RadioLSR prediction is
\begin{equation}
\hat{\mathbf y}_{\mathrm{LS}}=b_{\mathrm{LS}}\left(\mathbf z \right)+\hat{\mathbf r}_{\mathrm{LS}}.
\label{eq:radiolsr_final}
\end{equation}

In the first training stage, the LoS and Shadow branches are optimized
over their corresponding regions using
\begin{equation}
\begin{aligned}
\mathcal L_{\mathrm{base,LS}}
={}&
\lambda_{\mathrm{LoS}}
\ell_{\mathrm{mask}}^{(2)}
\left(
\hat{\mathbf y}_{\mathrm{LoS}},
\mathbf y;
\mathbf M_{\mathrm{LoS}}
\right)
\\
&+
\lambda_{\mathrm{Shd}}
\ell_{\mathrm{mask}}^{(2)}
\left(
\hat{\mathbf y}_{\mathrm{Shd}},
\mathbf y;
\mathbf M_{\mathrm{Shd}}
\right).
\end{aligned}
\label{eq:radiolsr_base_loss}
\end{equation}
where \(\lambda_{\mathrm{LoS}}\geq 0\) and \(\lambda_{\mathrm{Shd}}\geq 0\) weight the LoS- and Shadow-region objectives, respectively.

In the second stage, the trained base branches are frozen, and only \(g_{\mathrm{LS}}\) is optimized using
\begin{equation}
\mathcal L_{\mathrm{res,LS}}=\ell_{\mathrm{mask}}^{(2)}\left(\hat{\mathbf r}_{\mathrm{LS}}, \mathbf y-b_{\mathrm{LS}}(\mathbf z);\mathbf M_{\mathrm{valid}}\right).
\label{eq:radiolsr_residual_loss}
\end{equation}

\begin{table}[t]
\vspace{4pt}
\centering
\caption{Unified U-Net backbone used by the prediction branches.}
\label{tab:radiodecomp_unet_backbone}
\setlength{\tabcolsep}{4pt}
\begin{tabular}{c|c}
\toprule
\textbf{Stage}
&
\textbf{Specification}
\\
\midrule
Encoder
&
DoubleConv \((C_b)\), then
3\(\times\)[MaxPool + DoubleConv]
\\
&
Channels:
\(2C_b,\,4C_b,\,8C_b\)
\\
Bottleneck
&
DoubleConv \((8C_b)\)
\\
Decoder
&
3\(\times\)[Up + skip concat + DoubleConv]
\\
&
Channels:
\(4C_b,\,2C_b,\,C_b\)
\\
Output
&
\(1\times1\) convolution with one output channel
\\
\bottomrule
\end{tabular}
\end{table}

\section{Experiments}
\label{sec:experiments}
In the experiments, we validate the preceding analysis and evaluate the effectiveness of RadioDecomp under different training and evaluation conditions.

\subsection{Experimental Setup}
\label{subsec:experimental_setup}

\subsubsection{Dataset and Evaluation Protocols}

Experiments are conducted on the U6G XL-MIMO Radiomap dataset \cite{li2026u6gxlmimoradiomapprediction}, which contains \(78{,}400\) radiomaps generated from \(800\) urban scenes and \(98\) BS configurations. Each sample contains an observable environment representation, a BS configuration representation, and a \(128\times128\) received-power radiomap. Unless otherwise specified, the dataset is divided using a \(6/1/3\) training/validation/test ratio with random seed \(42\). Three evaluation protocols are considered:

\begin{itemize}
    \item \emph{Random}: all radiomap samples are randomly divided into training, validation, and test sets.

    \item \emph{Cross-config}: the BS configurations used for testing are disjoint from those used for training.

    \item \emph{Cross-env}: the urban scenes used for testing are disjoint from those used for training.
\end{itemize}

\subsubsection{Compared Predictors}

The following controlled predictors are considered:

\begin{itemize}
\item \emph{Vanilla}: a standalone U-Net that directly predicts the radiomap from the observable environment and BS configuration. The environment is represented by the height map, while the BS configuration is represented by a binary \(3\)-dB beam mask.

\item \emph{BaseFR}: the feature-guided monolithic base defined in Section~\ref{subsec:radiofr}, without residual correction.

\item \emph{RadioFR}: the feature-guided monolithic instantiation defined in Section~\ref{subsec:radiofr}.

\item \emph{BaseLSR}: the LoS-Shadow structured base defined in Section~\ref{subsec:radiolsr}, without residual correction.

\item \emph{RadioLSR}: the complete LoS-Shadow-Residual instantiation defined in Section~\ref{subsec:radiolsr}.
\end{itemize}

Unless otherwise specified, all experiments use the following initial channel widths. BaseFR uses \(C_b=32\), while the base and residual branches of RadioFR both use \(C_b=32\). For BaseLSR and RadioLSR, the LoS and Shadow branches each use \(C_b=16\); the residual branch of RadioLSR uses \(C_b=32\).

\subsubsection{Training and Metrics}

All predictors are optimized using AdamW with a batch size of \(32\) and an initial learning rate of \(10^{-3}\). All training objectives use the masked squared \(L_2\) loss defined in \eqref{eq:radiodecomp_masked_l2}. RadioFR and RadioLSR are trained in two stages: the base predictor is trained first and then frozen, after which only the residual branch is optimized. Unless otherwise specified, each training stage uses \(60\) epochs, and the checkpoint with the lowest validation loss is retained. For the RadioLSR base, \(\lambda_{\mathrm{LoS}}=\lambda_{\mathrm{Shd}}=0.5\).

Performance is evaluated over the valid non-building region. Training and test Root Mean Squared Error (RMSE) are reported to measure squared-error performance, while test Mean Absolute Error (MAE) is additionally provided as a complementary measure of average prediction error that is less sensitive to large local errors.

\subsection{Propagation-Prior Guidance and Residual Correction}
\label{subsec:prior_residual_effects}

We first evaluate the effects of propagation-prior injection and residual correctability under the three evaluation protocols. Table~\ref{tab:prior_residual_main} compares the considered predictors.

\begin{table*}[!t]
\centering
\caption{Training RMSE and test RMSE/MAE (dB) under the \(6/1/3\)
random, cross-config, and cross-env protocols.}
\label{tab:prior_residual_main}
\scriptsize
\setlength{\tabcolsep}{2.5pt}
\begin{tabular}{l|ccc|ccc|ccc}
\toprule
\multirow{2}{*}{\textbf{Method}}
&
\multicolumn{3}{c|}{\textbf{Random}}
&
\multicolumn{3}{c|}{\textbf{Cross-config}}
&
\multicolumn{3}{c}{\textbf{Cross-env}}
\\
\cline{2-10}
&
\textbf{Train RMSE}
&
\textbf{Test RMSE}
&
\textbf{Test MAE}
&
\textbf{Train RMSE}
&
\textbf{Test RMSE}
&
\textbf{Test MAE}
&
\textbf{Train RMSE}
&
\textbf{Test RMSE}
&
\textbf{Test MAE}
\\
\midrule
Vanilla
& 5.7382  & 5.9414 & 4.0935
& 5.7474  & 11.3364 & 8.7591
& 8.5897  & 11.6593 & 8.3522
\\
\midrule
BaseFR
& 3.4333  & 3.6075 & 2.1264
& 3.7145  & 6.4390 & 4.6388
& 6.1419  & 8.1933 & 4.7766
\\
RadioFR
& 3.3841  & 3.5588 & 2.0856
& 3.0630  & \textbf{5.9706} & \textbf{4.2466}
& \textbf{4.8478}  & \textbf{7.8755} & \textbf{4.4046}
\\
\midrule
BaseLSR
& 7.2615  & 7.2742 & 3.7781
& 7.2774  & 11.0581 & 8.2632
& 8.4982  & 10.0741 & 5.3493
\\
RadioLSR
& \textbf{3.2009}  & \textbf{3.4356} & \textbf{1.8905}
& \textbf{3.0369}  & 9.1946 & 7.1526
& 5.9953  & 8.0501 & 4.4138
\\
\bottomrule
\end{tabular}
\end{table*}

Under the random protocol, the training and test RMSE values remain
close for all predictors, with gaps below \(0.24\) dB. This confirms that random splitting produces only a small train-test gap. In contrast, the gaps become substantially larger under the cross-config and cross-env protocols. Vanilla is consistently outperformed by BaseFR, RadioFR, and RadioLSR in both test RMSE and MAE under all three protocols. In particular, compared with Vanilla, BaseFR reduces test RMSE by \(2.3339\), \(4.8974\), and \(3.4660\) dB under the random, cross-config, and cross-env protocols, respectively. Since BaseFR does not include residual correction, these improvements demonstrate the benefit of injecting propagation-related representations. BaseLSR does not provide uniformly better standalone performance because its LoS-Shadow structure deliberately models only selected propagation effects, leaving the remaining discrepancy for residual correction. 

Consistent with this design, residual correction improves both prior-guided bases under all three protocols. Relative to BaseFR, RadioFR reduces test RMSE by \(0.0487\), \(0.4684\), and \(0.3178\) dB. The correction is more pronounced for BaseLSR: RadioLSR reduces test RMSE by \(3.8386\), \(1.8635\), and \(2.0240\) dB. The consistent improvements over two structurally different bases support the effectiveness of RadioDecomp in correcting their remaining predictable discrepancies.

Fig.~\ref{fig:qualitative_comparison} provides representative spatial comparisons under the two shifted protocols. RadioFR refines the local coverage details captured by BaseFR and reduces the corresponding localized errors. The correction is particularly pronounced from BaseLSR to RadioLSR in the cross-config setting: the residual branch captures additional spatial variations not represented by the LoS-Shadow base, which may arise from reflection and other complex propagation interactions. These results visually support the corrective role of RadioDecomp.

\begin{figure*}[!t]
\centering
\includegraphics[width=0.99\textwidth]
{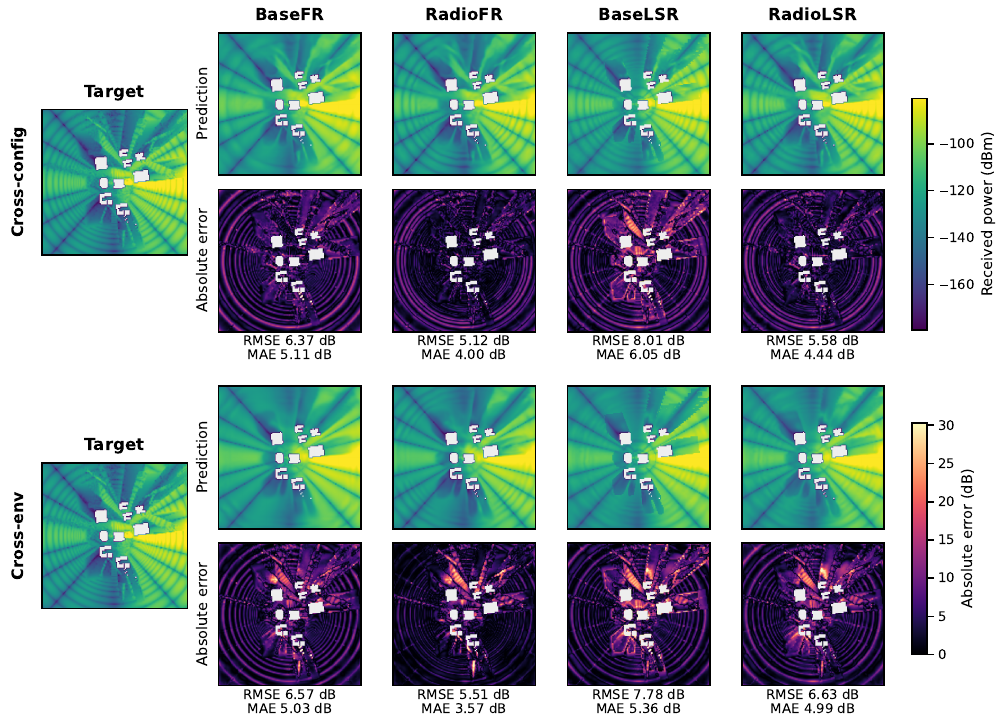}
\caption{Qualitative comparison on representative cross-config and cross-env test samples, with radiomaps and error maps using common received-power and error scales, respectively, white regions denoting masked buildings, and per-sample RMSE and MAE reported below each error map.}
\label{fig:qualitative_comparison}
\end{figure*}

\subsection{Propagation-Prior Learning under Extremely Limited Supervision}
\label{subsec:limited_sample_prior}

In practical deployments, acquiring sufficient radiomaps for a target area can be costly. We therefore restrict the training set to only \(10\) radiomaps under the random protocol to evaluate RadioDecomp under extremely limited supervision. Each training stage uses \(200\) epochs in this experiment. The results are reported in Table~\ref{tab:tiny_sample_prior}.

\begin{table}[t]
\centering
\caption{Prediction performance with only \(10\) training radiomaps
under the random protocol.}
\label{tab:tiny_sample_prior}
\small
\setlength{\tabcolsep}{4.5pt}
\begin{tabular}{lccc}
\toprule
\textbf{Method}
&
\textbf{Train RMSE}
&
\textbf{Test RMSE}
&
\textbf{Test MAE}
\\
\midrule

Vanilla
& 14.9871
& 17.9310
& 13.6009
\\

BaseFR
& 12.3616
& 15.6376
& 11.0481
\\

BaseLSR
& 11.6288
& 20.8432
& 11.8843
\\

RadioFR
& 11.2497
& \textbf{14.4285}
& 10.4248
\\

RadioLSR
& \textbf{7.9732}
& 19.6369
& \textbf{9.3178}
\\

\bottomrule
\end{tabular}
\end{table}

Compared with the standard random protocol, the ten-sample setting produces a substantially larger train-test gap. The test RMSE exceeds the training RMSE by at least \(2.9439\) dB for all predictors, whereas the corresponding gaps (random) in Table~\ref{tab:prior_residual_main} are below \(0.24\) dB. Vanilla exhibits the smallest train-test RMSE gap of \(2.9439\) dB. This may partly result from its simpler input representation and weaker training fit, as it also has the highest training RMSE. Moreover, BaseFR reduces the test RMSE from \(17.9310\) to \(15.6376\) dB and the test MAE from \(13.6009\) to \(11.0481\) dB, demonstrating that propagation-related representations remain beneficial when labeled radiomaps are scarce.

Residual correction improves both prior-guided bases. Relative to BaseFR, RadioFR reduces the training RMSE by \(1.1119\) dB and the test RMSE by \(1.2091\) dB, achieving the lowest test RMSE of \(14.4285\) dB. Relative to BaseLSR, RadioLSR reduces the training RMSE by \(3.6556\) dB, the test RMSE by \(1.2063\) dB, and the test MAE by \(2.5665\) dB. These consistent reductions support the effectiveness of RadioDecomp under extremely limited supervision.

RadioLSR achieves the lowest test MAE of \(9.3178\) dB, although its test RMSE remains relatively high. Its LoS-Shadow partition may make the dominant propagation patterns easier to fit over most locations, thereby reducing average absolute error, while leaving a small number of large errors in complex or imperfectly partitioned regions. These localized errors receive substantially greater weight in RMSE, which explains the divergence between the two metrics.

\subsection{Residual Correction across Feature-Guided Base Capacities}
\label{subsec:capacity_control}

Although feature concatenation does not alter the population conditional target, a base predictor learned from finite samples may still retain an optimisation discrepancy. We examine this effect by varying the initial channel width of BaseFR over \(C_b\in\{32,48,64,96\}\). For each width, RadioFR uses the corresponding frozen BaseFR and a residual branch with a fixed width of \(32\). \footnote{Fixing the residual capacity provides a controlled comparison across base widths; a wider residual branch may provide additional correction but is not considered here.} Table~\ref{tab:capacity_control} reports the results under the three evaluation protocols.

Under the random and cross-config protocols, residual correction is more beneficial at smaller or intermediate base widths. The largest RMSE reductions are \(0.1707\) dB under the random protocol and \(0.4684\) dB under cross-config. At \(C_b=96\), the absolute RMSE differences are only \(0.0031\) and \(0.0038\) dB under the random and cross-config protocols, respectively, with RadioFR being marginally worse in the latter case. These results suggest that a sufficiently expressive BaseFR can capture most of the predictable relation in these settings. In particular, the analytically constructed beam map directly represents configuration-dependent radiation, which may limit the remaining correctable discrepancy under cross-config evaluation.

A different pattern appears under cross-env, where RadioFR reduces test RMSE at every width from \(0.2245\) to \(0.3315\) dB. Cross-environment prediction involves unseen spatial layouts, for which the relation between the observable geometric features and the radiomap is less completely covered by the training samples. The residual branch therefore remains useful for correcting finite-sample errors relative to the feature-guided base. RadioFR-64 also achieves a lower cross-env RMSE than BaseFR-96 (\(7.6678\) versus \(8.0248\) dB) despite using fewer parameters, indicating that the gain is not attributable solely to increased model capacity.

\begin{table*}[!t]
\centering
\caption{Residual correction across feature-guided base capacities under the three evaluation protocols, with all RadioFR residual branches using \(C_b=32\) and bold indicating the lower error within each BaseFR-RadioFR pair.
}
\label{tab:capacity_control}
\scriptsize
\setlength{\tabcolsep}{2.4pt}
\begin{tabular}{l|c|ccc|ccc|ccc}
\toprule
\multirow{2}{*}{\textbf{Method}}
&
\multirow{2}{*}{\textbf{Params}}
&
\multicolumn{3}{c|}{\textbf{Random}}
&
\multicolumn{3}{c|}{\textbf{Cross-config}}
&
\multicolumn{3}{c}{\textbf{Cross-env}}
\\
\cline{3-11}
&
&
\textbf{Train RMSE}
&
\textbf{Test RMSE}
&
\textbf{Test MAE}
&
\textbf{Train RMSE}
&
\textbf{Test RMSE}
&
\textbf{Test MAE}
&
\textbf{Train RMSE}
&
\textbf{Test RMSE}
&
\textbf{Test MAE}
\\
\midrule
BaseFR-32
& 3.1295M
& 3.4333 & 3.6075 & 2.1264
& 3.7145 & 6.4390 & 4.6388
& 6.1419 & 8.1933 & 4.7766
\\

BaseFR-48
& 7.0375M
& 2.9885 & 3.2183 & 1.9032
& 2.7451 & 5.5745 & 3.9754
& 4.6841 & 8.0554 & 4.4492
\\

BaseFR-64
& 12.5075M
& 2.5282 & 2.8246 & 1.6136
& 2.5515 & 5.6783 & 4.0489
& 4.9321 & 7.9993 & 4.4656
\\

BaseFR-96
& 28.1340M
& {2.5231} & {2.8096} & {1.6132}
& {2.1523} & \textbf{5.7090} & {4.0976}
& {4.1787} & {8.0248} & {4.2922}
\\

\midrule
RadioFR-32
& 6.2594M
& \textbf{3.3841} & \textbf{3.5588} & \textbf{2.0856}
& \textbf{3.0630} & \textbf{5.9706} & \textbf{4.2466}
& \textbf{4.8478} & \textbf{7.8755} & \textbf{4.4046}
\\

RadioFR-48
& 10.1673M
& \textbf{2.7781} & \textbf{3.0476} & \textbf{1.7520}
& \textbf{2.7298} & \textbf{5.5734} & \textbf{3.9512}
& \textbf{4.1774} & \textbf{7.7778} & \textbf{4.2241}
\\

RadioFR-64
& 15.6373M
& \textbf{2.4831} & \textbf{2.7865} & \textbf{1.5862}
& \textbf{2.5248} & \textbf{5.5063} & \textbf{3.8939}
& \textbf{4.2651} & \textbf{7.6678} & \textbf{4.1429}
\\

RadioFR-96
& 31.2638M
& \textbf{2.5185} & \textbf{2.8065} & \textbf{1.6042}
& \textbf{2.1504} & 5.7128 & \textbf{4.0915}
& \textbf{3.9909} & \textbf{7.8003} & \textbf{4.2199}
\\
\bottomrule
\end{tabular}
\end{table*}

\subsection{Training-Support Coverage under a Fixed Test Domain}
\label{subsec:fixed_test_coverage}

Corollary~\ref{cor:risk_gap} motivates a practical data-acquisition question: when the objective is to minimize prediction error over a prescribed target domain, should training data prioritize relevant environments, relevant BS configurations, or both? To isolate how this choice affects the train-test gap, we keep the test set fixed. Consequently, the test observable distribution and conditional-uncertainty term remain unchanged, and performance differences mainly reflect how the selected training data affect the learned predictor over the same target domain.

We partition the \(800\) environments into \(\mathcal E_{\mathrm{te}}\) and \(\mathcal E_{\mathrm{other}}\), each containing \(400\) environments, and the \(98\) configurations into \(\mathcal C_{\mathrm{te}}\) and \(\mathcal C_{\mathrm{other}}\), each containing \(49\) configurations. The fixed test set is $\mathcal D_{\mathrm{te}}=\mathcal E_{\mathrm{te}}\times\mathcal C_{\mathrm{te}}$, which contains \(19{,}600\) radiomaps. We consider four fixed-budget training policies:
\begin{itemize}
\item \emph{Both Covered}:
\((\mathcal E_{\mathrm{te}}\times\mathcal C_{\mathrm{other}})\cup (\mathcal E_{\mathrm{other}}\times\mathcal C_{\mathrm{te}})\);

\item \emph{Environment Only}:
\(\mathcal E_{\mathrm{te}}\times\mathcal C_{\mathrm{other}}\);

\item \emph{Configuration Only}:
\(\mathcal E_{\mathrm{other}}\times\mathcal C_{\mathrm{te}}\);

\item \emph{Neither Covered}:
\(\mathcal E_{\mathrm{other}}\times\mathcal C_{\mathrm{other}}\).
\end{itemize}

Each policy uses \(17{,}600\) training and \(2{,}000\) validation radiomaps. We additionally include \emph{Both Covered (Full)}, which uses the complete leakage-free Both Covered pool, with \(37{,}200\) training and \(2{,}000\) validation samples. No exact environment-configuration pair in the fixed test set is used for training or validation. The RadioLSR architecture and optimization settings are held constant.

\begin{table}[!t]
\centering
\caption{Fixed-test performance of RadioLSR under different policies.}
\label{tab:fixed_test_coverage}
\small
\setlength{\tabcolsep}{4.0pt}

\begin{tabular}{l|ccc}
\toprule

\textbf{Training Policy}
&
\textbf{Train RMSE}
&
\textbf{Test RMSE}
&
\textbf{Test MAE}
\\

\midrule

Both Covered
&
3.4032
&
4.6924
&
2.5305

\\

Environment Only
&
3.1738
&
5.7332
&
3.8413

\\

Configuration Only
&
\textbf{3.0217}
&
9.3269
&
4.4443

\\

Neither Covered
&
3.0554
&
10.4874
&
6.1990

\\

Both Covered (Full)
&
3.1288
&
\textbf{4.4540}
&
\textbf{2.3567}

\\

\bottomrule
\end{tabular}
\end{table}

As shown in Table~\ref{tab:fixed_test_coverage}, the four policies obtain similar training RMSE values, ranging only from \(3.0217\) to \(3.4032\) dB under the common \(17{,}600\)-sample budget, but their fixed-test performance differs substantially. Neither Covered produces the highest test RMSE and MAE of \(10.4874\) and \(6.1990\) dB. Relative to this policy, Configuration Only reduces the two errors by \(1.1605\) and \(1.7547\) dB, whereas Environment Only reduces them by \(4.7542\) and \(2.3577\) dB. Thus, when only one factor can be covered, acquiring data from test-relevant environments is more beneficial than acquiring data with test-relevant configurations in the current dataset utilising RadioLSR. Covering both factors yields complementary gains and achieves the best fixed-budget results, with a test RMSE of \(4.6924\) dB and MAE of \(2.5305\) dB. These differences, with comparable training errors and no pair-level leakage, show that the train-test gap depends strongly on where the training samples are collected.

Expanding Both Covered to the complete candidate pool more than doubles the training-set size, from \(17{,}600\) to \(37{,}200\), but provides smaller additional reductions of \(0.2384\) dB in RMSE and \(0.1738\) dB in MAE. The results suggest a practical acquisition priority: training data should first cover both the target-relevant environments and configurations, with particular attention to environment coverage; increasing sampling density within the covered support provides a further but smaller benefit.

\subsection{Effect of Observable-Representation Coarsening}
\label{subsec:observation_coarsening}

Proposition~\ref{prop:observation_enrichment} shows that enriching the observation with additional predictive information can reduce irreducible conditional uncertainty. We examine this effect in the reverse direction by progressively coarsening the observable environment and BS-configuration representations. The same simple monolithic U-Net, optimization strategy, and random \(6/1/3\) split across all settings are used. Training RMSE is treated as the primary indicator of information loss, while test RMSE is reported as a practical reference because input coarsening may also change the train-test distribution mismatch.

We consider two BS-configuration representations: the continuous beam map in Section~\ref{subsec:radiodecomp_common_inputs} and its binary \(3\)-dB high-gain-region mask. The environment is represented by either the complete height map or a coarsened version with \(10\%\) of the height observations removed. Table~\ref{tab:observation_coarsening} reports the four resulting input combinations.

\begin{table}[!t]
\centering
\caption{RMSE under different beam and height-map representations.}
\label{tab:observation_coarsening}
\small
\setlength{\tabcolsep}{3.5pt}
\begin{tabular}{ll|cc}
\toprule
\textbf{Beam Input}
&
\textbf{Height Input}
&
\textbf{Train RMSE}
&
\textbf{Test RMSE}
\\
\midrule

Continuous
&
Complete
&
\textbf{3.4488}
&
\textbf{3.6185}
\\

Continuous
&
\(10\%\) missing
&
3.5844
&
3.7440
\\

\(3\)-dB mask
&
Complete
&
5.7382
&
5.9414
\\

\(3\)-dB mask
&
\(10\%\) missing
&
5.7254
&
5.9323
\\

\bottomrule
\end{tabular}
\end{table}

The continuous beam representation provides the dominant improvement. Relative to the \(3\)-dB mask, it reduces training RMSE by \(2.2894\) dB with the complete height map and by \(2.1410\) dB under \(10\%\) height missingness. In comparison, removing \(10\%\) of the height observations increases training RMSE by only \(0.1356\) dB with the full beam map and changes it by a negligible \(0.0128\) dB in the opposite direction with the \(3\)-dB mask. Under the controlled training setting, the substantial reductions obtained from the continuous beam map are consistent with the lower conditional uncertainty enabled by richer observable information. The test results largely follow the same dominant trend. The only exception is that the fourth setting outperforms the third by \(0.0091\) dB. This minor reversal may result from changes in the induced train-test distribution mismatch or finite-sample optimization variation and does not indicate that removing height information lowers conditional uncertainty.

\section{Conclusion}
\label{sec:conclusion}

This paper investigated deterministic radiomap blind prediction under incomplete observation. Under squared loss, we identified its population-optimal target as the conditional-mean radiomap. Based on this target, we established an exact risk decomposition into reducible predictor approximation and irreducible uncertainty induced by hidden propagation factors. Building on this decomposition, we characterized the train-test risk gap through domain-dependent changes in both terms and proved that informative observation enrichment reduces the uncertainty term.

With the observation fixed, predictor design can further reduce approximation error. We therefore examined propagation priors as physically grounded guidance for this reduction, while showing that their implementations may bias the attainable predictor. To retain this guidance while correcting such deviations, we proposed RadioDecomp, which augments a prior-guided base with residual refinement. Its feature-guided and LoS-Shadow structured instantiations, RadioFR and RadioLSR, improved upon their corresponding bases under random, distribution-shifted, and limited-data settings. Controlled studies of model capacity, training-support coverage, and observation coarsening further supported the analysis. Overall, observation enrichment and correctable propagation-prior learning provide complementary routes to reducing irreducible uncertainty and predictor-approximation error, respectively.

\bibliographystyle{IEEEtran}
\bibliography{bibtex/bib/IEEEexample}

\end{document}